\documentclass[runningheads]{llncs}
\usepackage[T1]{fontenc}
\usepackage{graphicx}
\usepackage{multirow}

\usepackage{esvect}
\usepackage{amsmath,amssymb,amsfonts}
\usepackage{amsthm}
\usepackage{mathrsfs}
\usepackage[title]{appendix}
\usepackage{xcolor}
\usepackage{hyperref}
\usepackage{textcomp}
\usepackage{manyfoot}
\usepackage{booktabs}
\usepackage{algorithm}
\usepackage{algorithmicx}
\usepackage{algpseudocode}
\usepackage{listings}
\usepackage{tikz}
\usetikzlibrary{positioning, arrows.meta}
\usepackage{fullpage}

\begin{document}
\title{Hierarchical Identity-Based Signature with Designated Aggregator from Lattices}
\author{
Stuti Kumari\inst{1} \and
Kunal Dey\inst{2} \and 
Vikas Srivastava\inst{3} \and
Sumit Kumar Debnath\inst{1}
}

\institute{
$^{1}$National Institute of Technology Jamshedpur, Jamshedpur 831014, India\\
\email{2023rsma004@nitjsr.ac.in}, 
\email{sdebnath.math@nitjsr.ac.in}\\
$^{2}$ SRM University-AP, Andhra Pradesh, India-522240\\
\email{kunaldey3@gmail.com},
\email{kunal.d@srmap.edu.in}\\
$^{3}$ National Institute of Technology, Warangal - 506004, Telangana, India\\
\email{vikas.math123@gmail.com}
}
 \maketitle   

\begin{abstract} 
In hierarchical organizations, authenticating data from multiple users can be complex and resource-intensive. Hierarchical Identity-Based Signature with Designated Aggregator (HIBS-DA) provides an efficient solution by allowing users at different levels to generate signatures that can be combined into a single, compact signature.  We first introduce the HIBS-DA framework and present the {\em{first}} lattice-based construction of HIBS-DA.
Our scheme allows users at different hierarchical levels to generate individual signatures that can be aggregated into a single, compact signature, reducing communication and verification costs. The proposed construction is secure, correct, and resistant to forgery, making it suitable for large-scale environments such as universities, corporations, and government agencies.
\end{abstract}

\keywords{Hierarchical identity-based signature \and Aggregate Signature \and Lattices  \and SIS Problems.}

\section{Introduction}
In classical public-key cryptography, a certification authority (CA) issues digital certificates to users to bind their identities to public keys, resulting in significant certificate management overhead. The idea of identity-based cryptography was originally put forward by Shamir (1985) \cite{shamir1984identity}, wherein identities like email addresses function as public keys, with corresponding secret keys issued by a PKG. Boneh and Franklin \cite{boneh2003identity} introduced the earliest practical scheme for Identity-based encryption in 2001, based upon the bilinear Diffie-Hellman (BDH) assumption, which subsequently led to further constructions including IBS schemes \cite{hess2002efficient}.
Subsequently, various extensions have appeared in the literature, including identity-based signature (IBS) schemes \cite{hess2002efficient}. However, IBS typically employs a single PKG, which can become a performance bottleneck, making such systems unsuitable for large organizations due to the potential overload on the PKG.\\
In many modern applications, such as secure data sharing \cite{wang2018lattice}, e-governance \cite{kansal2022efficient}, and distributed systems \cite{alkadri2024practical}, it is often necessary to authenticate data from multiple users organized in hierarchical structures. Moreover, when numerous signatures are involved, storing and verifying each individually becomes costly in terms of time and space. To address these challenges, Hierarchical Identity-Based Signature with Designated Aggregator (HIBS-DA) schemes have been introduced. An HIBS-DA scheme combines the advantages of hierarchical identity-based cryptography (HIBC) \cite{gentry2002hierarchical} and aggregate signatures \cite{boneh2003aggregate}.
In a hierarchical identity-based setting, users are organized into a tree-like structure with a trusted Private Key Generator (PKG) at the root. The PKG delegates key generation down the hierarchy, enabling scalable and decentralized key management. This is particularly useful in large organizations, government departments, academics, hospitals, companies where authority and access are distributed across multiple levels.\\
On the other hand, an aggregate signature scheme \cite{boneh2003aggregate} allows multiple signatures possibly from different users and on different messages, to be merged into one compact signature. This aggregated signature can subsequently be verified efficiently, authenticating all underlying messages simultaneously.\\
An HIBS-DA scheme, therefore, supports
signatures by multiple users at different levels of a hierarchy,
Identity-based key generation without relying on certificates.
Such a system is highly beneficial in scenarios where multiple users from different branches of a hierarchy need to sign data, and the verifier requires a compact and efficient authentication mechanism. \\
In this setting, a single private key generator (PKG) and multiple hierarchical groups of users exist. Each group forms a hierarchy of depth $t$, with a total of $N$ such hierarchies, each having the same depth $t$.
Now, suppose for a specific application, it is required to obtain documents signed by the $k$-th level signer from each hierarchy. That is, the signers denoted as $s_{1k}, s_{2k}, \ldots, s_{Nk}$ from the figure \ref{fig:HIBS-DA} corresponding to the $k$-th level user in each of the $N$ hierarchies are required to generate signatures on their respective data. These signatures, along with the signed data, must be submitted wherever needed and stored for verification.
Instead of storing and verifying $N$ individual signatures, we can leverage an aggregate signature scheme. This allows all the individual signatures from the $k$-th signer in each hierarchy to be compressed into a single, compact aggregate signature. This one signature is sufficient to authenticate all the corresponding documents, significantly improving efficiency in terms of storage and verification. In this paper, we begin the investigation of HIBS-DA. We formalize its syntax, security architecture, and provide the {\em{first}} lattice-based construction, and evaluate its efficiency using comprehensive complexity estimates.

\subsection{Application}
The hierarchical identity-based signature with designated aggregator (HIBS-DA) schemes have several practical applications, we present an application of HIBS-DA in detail as follows.

Consider an academic institution organized in a hierarchical structure comprising a Director at the top level, followed by $N$ Heads of Departments (HoDs) ($h_1, h_2, \ldots, h_N$), and $N$ Financial Officers ($f_1, f_2, \ldots, f_N$), with authority delegated in that order. This hierarchical structure reflects a natural delegation model, where the Director acts as the root Private Key Generator (PKG), enabling identity-based key generation throughout the levels of the hierarchy.\\
Now, consider two scenarios for a particular academic year:
\begin{enumerate}
\item[(i)] Departmental Placement Records Authentication:
Each HoD ($h_i$) is responsible for submitting the placement records of their respective departments, digitally signed to ensure authenticity and integrity. Rather than storing and verifying all $N$ individual signatures, a designated aggregator can be employed to compress these signatures into a single aggregate signature. This compact signature can then be used to authenticate all placement records of different departments at once, improving efficiency in verification and storage.
\item[(ii)]Departmental Financial Records Authentication:
Each Financial Officer ($f_i$) submits the financial report for their respective department, again digitally signed. As in the previous case, the individual signatures from all financial officers can be aggregated into a single signature. This single compact signature serves as a proof of authenticity for all departmental financial records.
\end{enumerate}
\begin{figure}[ht!]
   \centering
 \includegraphics[width=0.9\textwidth]{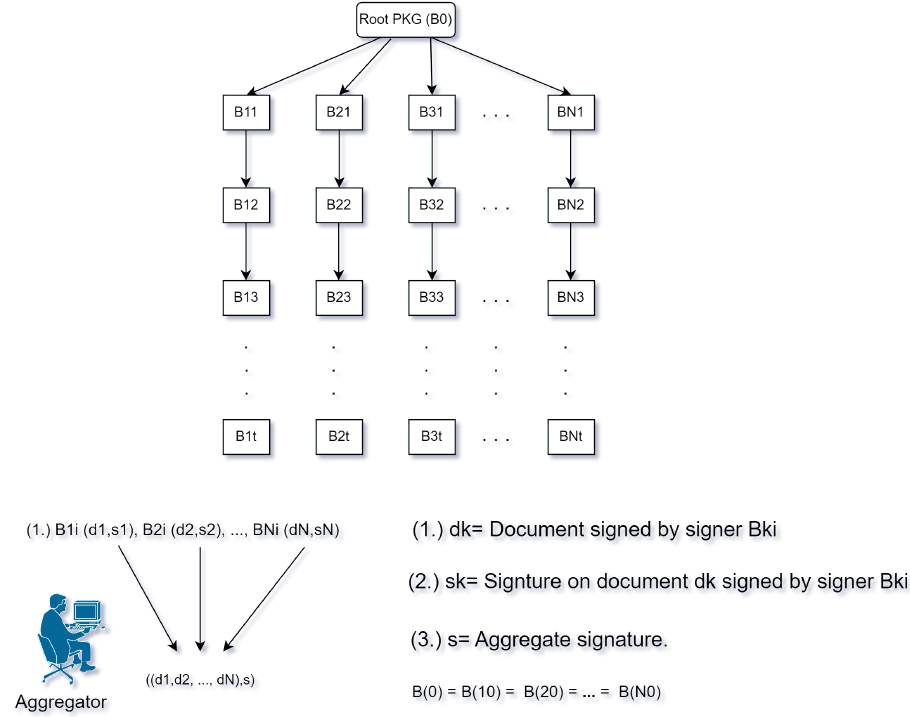}
   \caption{HIBS-DA.}
   \label{fig:HIBS-DA}
 \end{figure}

\noindent By employing the HIBS-DA scheme in such a setting, the institution achieves:
Efficient key delegation aligned with its organizational hierarchy,
Compact and scalable authentication of multiple signed documents,
Post-quantum security is built upon lattice-based assumptions.
Such a model is ideal for hierarchical organizations that handle structured, multi-source data submissions while aiming to minimize computational and communication overhead. Refer to Figure~\ref{fig:HIBS-DA} for a visual demonstration.


\subsection{Review of Literature}
An overview of current lattice-based approaches to hierarchical identity-based signatures and aggregate signatures is presented in this section.\\
Gentry and Silverberg \cite{gentry2002hierarchical} introduced the hierarchical identity-based signature (HIBS) scheme, which extends traditional identity-based signature (IBS) systems by incorporating a multi-level hierarchical structure, similar to those found in real-world organizations. In a HIBS scheme, multiple private key generators (PKGs) are arranged hierarchically, with each PKG's secret key being generated by its parent. This approach reduces the computational burden on the root PKG, making it highly suitable for large-scale deployments. Most HIBS schemes in the literature \cite{gentry2002hierarchical,chow2004secure,zhang2009new} rely on the difficulty of solving the discrete logarithm problem. However, in 1997, Shor demonstrated that quantum algorithms could solve this problem efficiently, posing a threat to such systems.\\
Several lattice-based HIBS schemes have been developed.
Rückert \cite{ruckert2010strongly}, inspired by the basis delegation techniques of \cite{cash2012bonsai}, introduced 
the first HIBS scheme built on lattices, both in the random-oracle setting and in the standard model. Despite these advances, these schemes tend to involve large system parameters, and the efficiency of both the private key generation and the signing process depends on the signer's position within the hierarchy. To enhance efficiency,
Tian et al. \cite{tian2012new} introduced a novel HIBS scheme based on lattices that does not rely on random oracles and offers better performance.
Then, lattice-based HIBS schemes \cite{tian2013efficient,srivastava2020hierarchical,yi2019access}
have been proposed.\\
 The idea of aggregate signatures was initially proposed by Boneh et al.\cite{boneh2003aggregate}. It merges several signatures, each linked to distinct messages, into a single compact signature. This compact signature can authenticate several message-signature pairs for different users simultaneously. Such a scheme significantly reduces the storage space needed for signatures, lowers the transmission bandwidth demand, and minimizes the computational cost of verifying signatures. 
Various lattice-based aggregate signature schemes \cite{shen2016provably,el2014towards} have been developed.\\

\section{Our Contributions}
\label{contri}
To the best of our knowledge, no HIBS-DA scheme has been proposed in the existing literature. Therefore, we propose to construct a lattice-based HIBS-DA scheme to address this gap.
We propose the {\em{first}} HIBS-DA scheme based on lattices that simultaneously supports hierarchical key delegation and aggregation of signatures from multiple branches in a hierarchical PKG structure. 

\noindent To achieve the HIBS-DA scheme, we integrate the Chinese Remainder Theorem (CRT) technique described in (see in Appendix \ref{crt},\ref{INT}) with the hierarchical identity-based signature scheme proposed by Tian et al.~\cite{tian2012new}. The construction of our HIBS-DA scheme is presented in Section \ref{CONS}, while its correctness and security proofs are provided in the subsequent subsections. The detailed efficiency analysis of our HIBS-DA scheme, describing round complexity, communication complexity, and computation complexity, is given in Section \ref{eff ana}.\\

\noindent In this part, we give a quick summary of our HIBS-DA framework. Section \ref{HI} provides a formal syntax, a full security analysis as well as a complete discussion of the methods.

\noindent \textbf{Syntax.} A HIBS-DA scheme is defined by six polynomial time algorithms
{HIBS-DA=(\sf HIBS-DA.Setup, \sf HIBS-DA.Derive, \sf HIBS-DA.Sign, \sf HIBS-DA.Verify, \sf HIBS-DA.Aggregate, \sf HIBS-DA.AggregateVerify)}.
\begin{itemize}
\item[$\bullet$] {\sf HIBS-DA.Setup}($1^{\lambda},1^{t}) \rightarrow (pp, MPK, MSK)$: Takes security parameter $\lambda$ and maximum hierarchy depth $t$; outputs public parameters $pp$, master public key $MPK$, and master secret key $MSK$.

\item[$\bullet$] {\sf HIBS-DA.Derive}($pp, SK_{ID|_{ir}}, ID|_{ik}) \rightarrow SK_{ID|_{ik}}$: Derives child identity secret key $SK_{ID_{|ik}}$ from parent key $SK_{ID_{|ir}}$.

\item[$\bullet$] {\sf HIBS-DA.Sign}$(ID|_{ik},M_{ik}, \mathsf{SK}_{ID|_{ik}}) \rightarrow v_{ik}$: Produces signature $v_{ik}$ on message $M_{ik}$ for identity $ID_{|ik}$.

\item[$\bullet$] {\sf HIBS-DA.Verify}($pp, {ID|_{ik}}, M_{ik}, v_{ik}) \rightarrow (1/0)$: Verifies signature $v_{ik}$ for message $M_{ik}$ under identity $ID_{|ik}$.


\item[$\bullet$] {\sf HIBS-DA.Aggregate} \big($\{M_{ik}\}_{i=1}^N ,\{ v_{ik}\}_{i=1}^N\big) \rightarrow v_k$: Combines $N$ message-signature pairs into aggregate signature $v_k$.


\item[$\bullet$] {\sf HIBS-DA.AggregateVerify}($\{M_{ik}\}_{i=1}^N, v_k, MPK) \rightarrow (1/0)$: Verifies aggregate signature $v_k$ against messages $\{M_{ik}\}_{i=1}^N$ using $MPK$.

\end{itemize}
\noindent \textbf{Security.} The standard notion for security used for our HIBS-DA scheme is Existential Unforgeability under Adaptive Identity and Chosen Message Attack (EUF-ID-CMA). In this security game, a PPT adversary $\mathcal{A}$ interacts with a challenger $\mathcal{C}$, attempting to forge a valid aggregate signature for a specific hierarchical level $k$ after making adaptive key and signature queries. The scheme is EUF-ID-CMA secure if for all PPT adversaries $\mathcal{A}$, the advantage $\text{Adv}^{\text{HIBS-DA}}_{\mathcal{A}}(\lambda)$ in winning this game is negligible in the security parameter $\lambda$.\\

\noindent \textbf{Technical Overview.} We now present a technical overview of our proposed HIBS-DA scheme. The system is built upon a cryptographic infrastructure headed by a Root Private Key Generator (Root PKG) at level $0$, which oversees a hierarchy of depth $t$ with levels indexed from $1$ to $t$. Each level $k$ (where $1 \leq k \leq t$) consists of a fixed population of $N$ entities. An entity at the $k$-th level of branch $i$ is uniquely identified by the tuple $ID|_{ik} = (ID_{i1}, \dots, ID_{ik})$. Moreover, $ID_0 = ID_{i0}$ is the root PKG’s identity for each $i = {1,2, \ldots, N}$.
 We consider a hierarchical setting in which, for a fixed hierarchical level $k \in \{1, \ldots, t\}$, each of the $N$ signers generates an individual signature on its corresponding message. A designated aggregator then collects 
all individual signatures and combines them into a single compact aggregate signature. 
The HIBS-DA framework is structured into six core phases (Setup, Derive, Sign, Verify, Aggregate, and AggregateVerify). The Setup phase, executed by a trusted third party (TTP), takes a security parameter $1^\lambda$ and maximum depth $1^t$ to produce a master public key $MPK = \mathbf{A}$ and a corresponding master secret key $MSK = \mathbf{T_A}$ for the root PKG by using the TrapGen algorithm. The TTP also provides the designated aggregator with a short basis $\mathcal{T}$ for an intersection lattice, which is defined in Appendix \ref{crt} by using Lemma \ref{homo}.
In the Derive phase, a child entity at identity $ID|_{ik}$ derives its secret key from its parent's secret key $SK_{ID|_{ir}}$. This involves computing a public matrix $P_{ID|_{ik}}$ from the parent's matrix and the child's identity component, and then using the $\textsf{BasisDel}$ algorithm to derive a basis consisting of short vectors $\mathbf{S'_{ik}}$ for the child's lattice, which becomes its secret key $SK_{ID|_{ik}}$. 
For the Sign phase, a user $ID|_{ik}$ signs a message $M_{ik}$ by generating a lattice-based signature $v_{ik}$ using the $\textsf{SamplePre}$ algorithm with their secret key, which a verifier checks for validity. Finally, in the Aggregate phase, the designated aggregator combines $N$ individual signatures $\{v_{ik}\}_{i=1}^N$ on messages $\{M_{ik}\}_{i=1}^N$ into a single compact signature $v_k$ using the CRT (refer\ref{crt},\ref{INT}) and the $\textsf{SamplePre}$ algorithm. The resulting aggregate signature $v_k$ can be efficiently verified to authenticate the entire set of messages and identities from level $k$.\\
\textbf{Security:} The proposed HIBS-DA scheme is existentially unforgeable under adaptive identity and chosen message attack (EUF-ID-CMA) under the hardness assumption of the SIS problem.
\begin{figure}[ht!]
   \centering
 \includegraphics[width=0.9\textwidth]{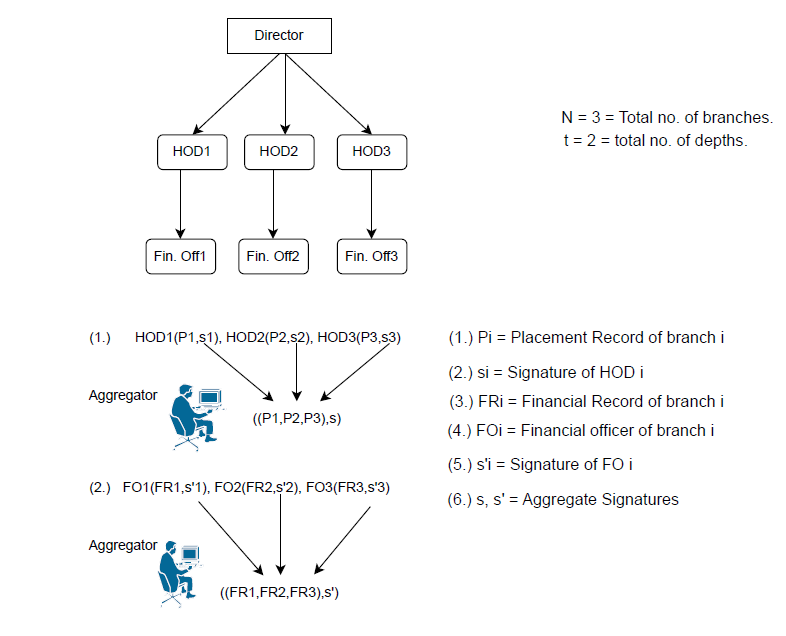}.
   \caption{Practical application of HIBS-DA.}
   \label{fig: app1}
 \end{figure}

\subsection{A Toy Example}
Figure~\ref{fig: app1} presents a toy example of our proposed HIBS-DA scheme. The scenario involves an institute with a Director, designated as the Root PKG, managing a system with three branches ($N=3$) and a hierarchy depth $t=2$.\\
The scheme operates as follows:\\

\noindent Departmental Record Aggregation: The Head of Department (HOD) at each branch, i.e., of each department signs their respective Placement Record $P_i$. Specifically, ${HOD}_i$ signs $P_i$ to produce signature $s_i$, for each $i= 1, 2, 3$. A designated aggregator then combines these three signatures $\{s_1, s_2, s_3\}$ into a single aggregate signature $s$, which validates the entire collection of placement records $(P_1, P_2, P_3)$.\\
\noindent Financial Record Aggregation:  In a parallel process, the financial officer $({FO}_i)$ of each department generates a signature  $s'_i$  on their respective Financial Record $FR_i$. A designated aggregator then compresses these signatures $\{s'_1, s'_2, s'_3\}$ into a single aggregate signature $s'$. This signature $s'$ provides collective authentication for the complete set of financial records $({FR}_1, {FR}_2, {FR}_3)$.\\

\noindent This approach allows the Director or any designated party to verify all placement records and financial records of each department using only two aggregate signatures ($s$ and $s'$) instead of six individual ones, demonstrating significant efficiency gains.

\section{Prelminaries}
\textbf{Notation.} For an integer $ q\geq 2$, let  $\mathbb{Z}_q$ represents the ring of integers modulo $q$, and $\mathbb{Z}_q^{n\times m}$ denotes the collection of all $n\times m$ matrices whose entries belong to $\mathbb{Z}_q$. Throughout, vectors are expressed using bold lowercase letters, while matrices are indicated by bold uppercase letters.
\begin{definition} \textbf{(Lattices)~\cite{micciancio2002complexity}:}
A lattice in the Euclidean space $\mathbb{R}^n$ is generated by a set of $m$ linearly independent vectors $\mathbf{y}_1, \ldots, \mathbf{y}_m \in \mathbb{R}^n$ with $n \ge m$. It is defined as
\[\mathbb{L}(\mathbf{y}_1, \ldots, \mathbf{y}_m)
= \left\{ \sum_{i=1}^{m} c_i \mathbf{y}_i : c_i \in \mathbb{Z} \right\}
\]
that is, the collection of all integer linear combinations of the basis vectors $\mathbf{y}_i$.
Here, $m$ is the rank, and $n$ is the dimension of the lattice.  
The set of vectors $\mathbf{y}_1, \ldots, \mathbf{y}_m$ is referred to as a lattice basis.\\
For any \( \mathbf{A} \in \mathbb{Z}_q^{n \times m}, \) we define

\[
\mathbb{L}_q^{\perp}(\mathbf{A}) = \left\{ \mathbf{y} \in \mathbb{Z}^m : \mathbf{A}\mathbf{y} = \mathbf{0} \bmod q \right\}.\]
\[
\mathbb{L}_q^{\mathbf{u}}(\mathbf{A}) = \left\{ \mathbf{y} \in \mathbb{Z}^m : \mathbf{A}\mathbf{y} = \mathbf{u} \bmod q \right\}
\]
The lattice $\mathbb{L}_q^{\mathbf{u}}(\mathbf{A})$  is a coset of $\mathbb{L}_q^{\perp}(\mathbf{A})$; that is, $\mathbb{L}_q^{\mathbf{u}}(\mathbf{A})=\mathbb{L}_q^{\perp}(\mathbf{A})+ \mathbf{t}$ for any vector $\mathbf{t}$ satisfying $\mathbf{A} . \mathbf{t} = \mathbf{u} \bmod q.$
\end{definition} 
\begin{definition}\label{trapgen}
\textbf{TrapGen($n, m, q$)~\cite{gentry2008trapdoors}:}  
The algorithm TrapGen is a probabilistic procedure that, given system parameters $n$, matrix width $m$, and a modulus $q$, outputs a pair
TrapGen$(n,m,q) \;\longrightarrow\; (\mathbf{A}, \mathbf{T})$,
where $\mathbf{A} \in \mathbb{Z}_q^{n \times m}$ is a matrix chosen  uniformly at random and $\mathbf{T}$ is a trapdoor basis having low norm for the lattice
$\Lambda_q^\perp(\mathbf{A}) = \{\, \mathbf{x} \in \mathbb{Z}^m : \mathbf{A} \mathbf{x} \equiv 0 \pmod{q} \,\}$.
\end{definition}
\begin{definition}
\textbf{SamplePre($\mathbf{T}$,$\mathbf{u}$,$t$)~\cite{gentry2008trapdoors}:}
The sampling algorithm, runs in probabilistic polynomial time (PPT), is given as input a pair $(\mathbf{A}, \mathbf{T})$ generated by the TrapGen($n, m, q$) algorithm, a vector $\mathbf{u} \in \mathbb{Z}^n$, and a Gaussian parameter $t$. It produces a vector $\mathbf{x} \in \mathbb{Z}^m$ such that $\|\mathbf{x}\| \leq t \sqrt{m}$ and satisfies the modular equation $\mathbf{A} \mathbf{x} = \mathbf{u} \bmod q$.
\end{definition}
\begin{definition}
\textbf{Domain sampling and uniform output \cite{gentry2008trapdoors}}: $\text{SampleDom}(1^n)$ samples an $x$ from some (possibly non-uniform) distribution over $D_n = \{e \in \mathbb{Z}^m : \|e\| \leq s \sqrt{m}\}$, for which the distribution of $\mathbf{A}\mathbf{x}$ is uniform over $\mathbb{Z}_q^n$, where $s$ is some prescribed bound. 
\end{definition}
\begin{definition}
\textbf{preimage min-entropy \cite{gentry2008trapdoors}}: \label{meh} 
Let $\mathbf{A} \in \mathbb{Z}_q^{n \times m}$
For each $\mathbf{b} \in \mathbb{Z}_q^n$, the conditional min-entropy of $\mathbf{c}$ sampled from $\text{SampleDom}(1^n)$, given that $\mathbf{A}\mathbf{c} = \mathbf{b}$, is no less than $\omega(\log n)$.
\end{definition}

\begin{definition}
\textbf{SampleRwithBasis \cite{agrawal2010lattice}}  
Let $\mathbf{c}_1, \ldots, \mathbf{c}_m \in \mathbb{Z}_q^n$ be the column vectors comprising the matrix 
$\mathbf{C} \in \mathbb{Z}_q^{n \times m}$.

\begin{enumerate}
    \item Execute $\text{TrapGen}(q,n,m)$ to obtain a uniformly distributed full-rank matrix
    $\mathbf{D} \in \mathbb{Z}_q^{n \times m}$ together with a trapdoor basis 
    $\mathbf{T_D}$ for $\Lambda_q^\perp(\mathbf{D})$ satisfying
    \[
        \|\widetilde{\mathbf{T}_D}\| \le \widetilde{L}_{T_G} 
        = \sigma_R / \omega(\sqrt{\log m}) .
    \]

    \item For each $i = 1, \ldots, m$:
    \begin{enumerate}
        \item[(2a)] Draw 
        $\mathbf{r}_i \leftarrow \text{SamplePre}(\mathbf{D}, \mathbf{T_D}, \mathbf{c}_i, \sigma_R)$.
        Then $\mathbf{D}\mathbf{r}_i = \mathbf{c}_i \bmod q$. Moreover, the statistical distribution of 
        $\mathbf{r}_i$ is close to 
        $D^{\mathbf{c}_i}_{\Lambda_q^\perp(\mathbf{D}), \sigma_R}$.

        \item[(2b)] Perform step (2a) repeatedly until $\mathbf{r}_i$ is no longer a linear combination of
$\mathbf{r}_1, \ldots, \mathbf{r}_{i-1}$.
    \end{enumerate}

    \item Consider $\mathbf{R} \in \mathbb{Z}^{m \times m}$ constructed by placing
$\mathbf{r}_1, \ldots, \mathbf{r}_m$ as its columns. By construction, $\mathbf{R}$ has full rank modulo $q$. 
    Output $\mathbf{R}$ and $\mathbf{T_D}$.
\end{enumerate}

\noindent Since $\mathbf{D}\mathbf{R} = \mathbf{C} \bmod q$, it follows that 
$\mathbf{D} = \mathbf{C}\mathbf{R}^{-1} \bmod q$. Therefore, $\mathbf{T_D}$ is a basis consisting of short vectors for
$\Lambda_q^\perp(\mathbf{C}\mathbf{R}^{-1})$. We need to verify that $\mathbf{R}$ is drawn according to some distribution that is close to $\mathcal{D}_{m\times m}$ statistically.
\end{definition}
\begin{definition}
\textbf{Short Integer Solution(SIS)~\cite{gentry2008trapdoors}:}
Suppose $q$ be an integer, $\mathbf{A}$ be a matrix from $\mathbb{Z}_q^{n \times m}$, and a real parameter $\beta$, the target is to find a nonzero integer vector $\mathbf{x} \in \mathbb{Z}^m$ that satisfies $\mathbf{A}\mathbf{x} \equiv \mathbf{0} \bmod{q}$ and whose norm is at most $\beta$.
\end{definition}

\noindent \textbf{Distributions over small-norm matrices.}
A matrix $R \in \mathbb{Z}^{m \times m}$ has an 
inverse over $\mathbb{Z}_q$, or is called invertible when its reduction modulo $q$ remains invertible in 
$\mathbb{Z}_q^{m \times m}$. Our construction operates with matrices possessing this $\mathbb{Z}_q$-invertibility whose columns all have small Euclidean norm.

\begin{definition}
Set 
\[
\sigma_R := \widetilde{L}_T G \cdot \omega(\sqrt{\log m})
           = \sqrt{n \log q}\,\omega(\sqrt{\log m}) .
\]
\end{definition}

\noindent We define $\mathcal{D}_{m \times m}$ as the distribution over matrices in 
$\mathbb{Z}^{m \times m}$ obtained by drawing each column independently from 
$\mathcal{D}_{\mathbb{Z}^m, \sigma_R}$ and conditioning on ensuring that the resulting matrix is invertible over 
$\mathbb{Z}_q$.

\begin{definition}{\textbf{SampleR$(1^m)$}.}
The procedure \textsf{SampleR}$(1^m)$ generates matrices over $\mathbb{Z}^{m \times m}$ whose distribution is within negligible statistical distance of $\mathcal{D}_{m\times m}$.

\begin{enumerate}
    \item Consider $B$ denote the standard lattice basis of $\mathbb{Z}^m$.
    \item For each $j = 1, \ldots, m$, sample
    \[
        s_i \gets \mathsf{SampleGaussian}(\mathbb{Z}^m, B, \sigma_R, 0).
    \]
\end{enumerate}
\begin{enumerate}
\item Form the matrix $R$ from the vectors $s_i$. Output $R$ when it is invertible modulo $q$, otherwise, perform step~2 again.
\end{enumerate}
\end{definition}

\begin{definition}
{\textbf{Basis delegation: {BasisDel}$(A, R, T_A, \sigma)$.}}\label{BD}
We next introduce an algorithm for basis delegation which maintains the dimensions of the matrices throughout the computation.  
This procedure allows one to transform a given lattice basis through a $\mathbb{Z}_q$-invertible matrix while preserving the structural properties required for lattice-based cryptographic constructions.

\noindent \textbf{Inputs:}
\begin{itemize}
    \item A rank-$n$ matrix $A \in \mathbb{Z}_q^{n \times m}$, representing the original lattice constraints.
    \item A $\mathbb{Z}_q$-invertible matrix $R \in \mathbb{Z}^{m \times m}$, sampled from the distribution $\mathcal{D}_{m \times m}$ (or a product of such matrices), which defines the transformation applied to the lattice basis.
    \item A basis $T_A$ of the lattice $\Lambda_q^{\perp}(A)$.
    \item A positive real parameter $\sigma \in \mathbb{R}_{>0}$, controlling the Gaussian sampling in subsequent computations.
\end{itemize}

\noindent \textbf{Output:}  
The algorithm computes $B := A R^{-1} \in \mathbb{Z}_q^{n \times m}$ and outputs a basis $T_B$ of the lattice $\Lambda_q^{\perp}(B)$.  
This basis $T_B$ effectively delegates the structure of $T_A$ through the transformation induced by $R$.
\end{definition}
\begin{lemma}~\cite{micciancio2002complexity} 
\label{homo}Let $\mathbb{L}$ be an $n$-dimensional lattice. There is a polynomial time algorithm which, given a lattice basis $\mathcal{B}$ and linearly independent vectors $\{\mathbf{b}_1, \ldots, \mathbf{b}_m\} \subseteq \mathbb{L}(\mathcal{B})$ ordered by non-decreasing norm such $\| \mathbf{b}_1 \| \leq \| \mathbf{b}_2 \| \leq \cdots \leq \| \mathbf{b}_m \|$, outputs a basis $\mathcal{R} = \{\mathbf{r}_1, \ldots, \mathbf{r}_m\}$ equivalent to $\mathcal{B}$. The output basis satisfies $\| \mathbf{r}_k \| \leq \max \{\left( \frac{\sqrt{k}}{2}\right) \| \mathbf{b}_k \|, \| \mathbf{b}_k \| \}$, for all $k = 1,2, \ldots m$. Furthermore, it holds that $\text{span}(\mathbf{r}_1, \ldots, \mathbf{r}_k) = \text{span}(\mathbf{b}_1, \ldots, \mathbf{b}_k)$
and $\| \mathbf{r}_k^* \| \leq \| \mathbf{b}_k^* \|$ for every $k = 1,2, \ldots m.$
\end{lemma}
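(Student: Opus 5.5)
The plan is to follow the classical argument of Micciancio and Goldwasser in two stages. First I construct a basis of $\mathbb{L}(\mathcal{B})$ that is ``triangular'' with respect to the given vectors. Then I reduce its length using the Gram--Schmidt structure. Throughout I assume, as the statement implicitly requires, that $\mathbb{L}(\mathcal{B})$ has rank $m$, so that $\mathbf{b}_1,\ldots,\mathbf{b}_m$ span the same real space as $\mathcal{B}$.

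\emph{Step 1 (triangular basis).} Write each $\mathbf{b}_k = \mathcal{B}\mathbf{c}_k$ with $\mathbf{c}_k\in\mathbb{Z}^m$, and let $\mathbf{C}=[\mathbf{c}_1,\ldots,\mathbf{c}_m]$, a nonsingular integer matrix. Applying Hermite normal form column operations to $\mathbf{C}^{-1}$ (scaled to an integer matrix), or equivalently computing $\mathbf{C}=\mathbf{U}\mathbf{T}$ with $\mathbf{U}$ unimodular and $\mathbf{T}$ upper triangular, yields in polynomial time a basis $\mathbf{r}_1,\ldots,\mathbf{r}_m$ of $\mathbb{L}(\mathcal{B})$ (namely $\mathcal{B}\mathbf{U}$). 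This basis satisfies $\mathbf{b}_k=\sum_{j\le k} t_{jk}\mathbf{r}_j$ with $t_{jk}\in\mathbb{Z}$ and $t_{kk}\neq 0$. Two facts follow immediately. First, $\mathrm{span}(\mathbf{r}_1,\ldots,\mathbf{r}_k)=\mathrm{span}(\mathbf{b}_1,\ldots,\mathbf{b}_k)$. Second, projecting orthogonally to this common span of the first $k-1$ vectors gives $\mathbf{b}_k^*=t_{kk}\mathbf{r}_k^*$, so $\|\mathbf{r}_k^*\|=\|\mathbf{b}_k^*\|/|t_{kk}|\le\|\mathbf{b}_k^*\|$. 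I expect this step to be the main obstacle. The care needed is in showing the factorization is computable in polynomial time with polynomially bounded entries, for which I would invoke the standard polynomial-time HNF algorithms.

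\emph{Step 2 (length reduction), done for $k=1,\ldots,m$.} There are two cases, according to $|t_{kk}|$.

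If $|t_{kk}|=1$, replace $\mathbf{r}_k$ by $\mathbf{b}_k$. Since $\mathbf{b}_k=\pm\mathbf{r}_k+\sum_{j<k}t_{jk}\mathbf{r}_j$, this is a unimodular change, so the set is still a basis. Spans and Gram--Schmidt vectors are unchanged, and $\|\mathbf{r}_k\|=\|\mathbf{b}_k\|$.

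If $|t_{kk}|\ge 2$, size-reduce instead: set $\mathbf{r}_k\leftarrow\mathbf{r}_k-\sum_{j<k}\lfloor\mu_{kj}\rceil\mathbf{r}_j$, processing $j$ from $k-1$ down to $1$. This is again unimodular and leaves $\mathbf{r}_k^*$ unchanged, and afterwards all Gram--Schmidt coefficients satisfy $|\mu_{kj}|\le 1/2$. Hence
\[
\|\mathbf{r}_k\|^2\le\|\mathbf{r}_k^*\|^2+\tfrac14\sum_{j<k}\|\mathbf{r}_j^*\|^2\le\tfrac14\|\mathbf{b}_k^*\|^2+\tfrac14\sum_{j<k}\|\mathbf{b}_j^*\|^2 .
\]
The first term uses $\|\mathbf{r}_k^*\|\le\|\mathbf{b}_k^*\|/2$. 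Using $\|\mathbf{b}_j^*\|\le\|\mathbf{b}_j\|\le\|\mathbf{b}_k\|$, which is where the non-decreasing ordering of the $\mathbf{b}_j$ enters, the right-hand side is at most $\frac{k}{4}\|\mathbf{b}_k\|^2$. This gives $\|\mathbf{r}_k\|\le\frac{\sqrt{k}}{2}\|\mathbf{b}_k\|$.

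Combining the two cases gives $\|\mathbf{r}_k\|\le\max\{\frac{\sqrt k}{2}\|\mathbf{b}_k\|,\|\mathbf{b}_k\|\}$. Neither modification changes any Gram--Schmidt vector or any span of the first $k$ vectors, so the span equalities and $\|\mathbf{r}_k^*\|\le\|\mathbf{b}_k^*\|$ from Step 1 persist. The final basis is equivalent to $\mathcal{B}$ because it was obtained by unimodular operations only. All steps are polynomial time, since size reduction is a standard polynomial-time procedure with bounded bit-sizes.
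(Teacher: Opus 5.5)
Your proposal is correct and is essentially the standard Micciancio--Goldwasser argument (triangularize via an HNF factorization $\mathbf{C}=\mathbf{U}\mathbf{T}$, then substitute $\mathbf{b}_k$ when $|t_{kk}|=1$ and size-reduce when $|t_{kk}|\ge 2$), which the paper does not reprove but simply cites. One small correction: when $t_{kk}=-1$, replacing $\mathbf{r}_k$ by $\mathbf{b}_k$ flips the sign of $\mathbf{r}_k^*$, so the Gram--Schmidt vectors are unchanged only up to sign; this is harmless because the argument uses only their norms and the spans.
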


\begin{lemma}\cite{agrawal2010lattice} \label{bd}Let \(\mathbf{A} \in \mathbb{Z}_q^{n \times m}\) be a rank-\(n\) matrix, and let \(\mathbf{R} \in \mathbb{Z}^{m \times m}\) be a \(\mathbb{Z}_q\)-invertible matrix sampled from the distribution \(\mathcal{D}_{m \times m}\) (or obtained as a product of such matrices). Let \(\mathbf{T_A}\) be a basis of the lattice \(\Lambda_q^{\perp}(\mathbf{A})\), and let \(\sigma \in \mathbb{R}_{>0}\) be a parameter.
Then the algorithm \(\textbf{BasisDel}(\mathbf{A}, \mathbf{R}, \mathbf{T_A}, \sigma)\) outputs a basis \(\mathbf{T_B}\) of the lattice \(\Lambda_q^{\perp}(\mathbf{B})\), where $\mathbf{B}= \mathbf{A} \mathbf{R^{-1}}$.
\end{lemma}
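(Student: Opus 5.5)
The argument uses only the algebra of the map $\mathbf{x} \mapsto \mathbf{R}\mathbf{x}$ together with Lemma~\ref{homo}. The plan is to show that multiplication by $\mathbf{R}$ is a bijection between the two lattices, use it to transport $\mathbf{T_A}$ into a full-rank set of vectors of $\Lambda_q^{\perp}(\mathbf{B})$, and then turn that set into a basis.

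\textbf{Step 1: $\mathbf{R}$ maps $\Lambda_q^{\perp}(\mathbf{A})$ onto $\Lambda_q^{\perp}(\mathbf{B})$.} Since $\mathbf{R}$ is $\mathbb{Z}_q$-invertible (for products of samples from $\mathcal{D}_{m\times m}$, because a product of $\mathbb{Z}_q$-invertible matrices is $\mathbb{Z}_q$-invertible), $\mathbf{R}^{-1} \bmod q$ exists and $\mathbf{B} = \mathbf{A}\mathbf{R}^{-1}$ is a well-defined rank-$n$ matrix over $\mathbb{Z}_q$.
\begin{itemize}
\item If $\mathbf{x} \in \Lambda_q^{\perp}(\mathbf{A})$, then $\mathbf{B}(\mathbf{R}\mathbf{x}) = \mathbf{A}\mathbf{R}^{-1}\mathbf{R}\mathbf{x} = \mathbf{A}\mathbf{x} = \mathbf{0} \bmod q$, so $\mathbf{R}\mathbf{x} \in \Lambda_q^{\perp}(\mathbf{B})$.
\item If $\mathbf{y} \in \Lambda_q^{\perp}(\mathbf{B})$, then $\mathbf{A}(\mathbf{R}^{-1}\mathbf{y}) = \mathbf{0} \bmod q$.
\end{itemize}
Thus $\mathbf{R}$ induces an injective, index-preserving embedding $\mathbf{R}\,\Lambda_q^{\perp}(\mathbf{A}) \subseteq \Lambda_q^{\perp}(\mathbf{B})$.

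\textbf{Step 2: $\mathbf{R}\mathbf{T_A}$ is a full-rank set in $\Lambda_q^{\perp}(\mathbf{B})$.} Both lattices contain $q\mathbb{Z}^m$ and have determinant $q^n$, because $\mathbf{A}$ and $\mathbf{B}$ both have rank $n$. The matrix $\mathbf{R}$ is nonsingular over $\mathbb{Q}$, since it is invertible mod $q$. Hence the columns of $\mathbf{R}\mathbf{T_A}$ are $m$ linearly independent vectors lying in $\Lambda_q^{\perp}(\mathbf{B})$, and they generate a full-rank sublattice of $\Lambda_q^{\perp}(\mathbf{B})$. Their lengths are bounded by $\|\mathbf{R}\|\cdot\|\mathbf{T_A}\|$, which is relevant for the later Gram--Schmidt bounds.

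\textbf{Step 3: convert the independent set into a basis.} A polynomial-time routine gives a genuine basis of $\Lambda_q^{\perp}(\mathbf{B})$ directly: for example, the Hermite normal form of $\mathbf{B}$'s kernel lattice, computable from $\mathbf{B}$ and $q$. Then apply Lemma~\ref{homo} with that basis as $\mathcal{B}$ and the columns of $\mathbf{R}\mathbf{T_A}$, sorted by norm, as $\{\mathbf{b}_i\}$. This outputs a basis $\mathbf{T_B}$ of $\Lambda_q^{\perp}(\mathbf{B})$ with $\|\widetilde{\mathbf{T_B}}\| \le \|\widetilde{\mathbf{R}\mathbf{T_A}}\|$. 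The BasisDel step involving $\sigma$ can be presented as randomizing this basis via SamplePre with parameter $\sigma$ and re-running Lemma~\ref{homo}. That does not affect the stated conclusion, which only asserts that the output is a basis of $\Lambda_q^{\perp}(\mathbf{B})$.

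\textbf{Main obstacle.} The delicate point is Step 3's use of a reference basis. Lemma~\ref{homo} needs some basis $\mathcal{B}$ of the target lattice as input, and the independent vectors $\mathbf{R}\mathbf{T_A}$ alone need not form a basis, since $\mathbf{R}$ is not unimodular over $\mathbb{Z}$. I would resolve this by noting that a (long) basis of any $q$-ary lattice $\Lambda_q^{\perp}(\mathbf{B})$ is efficiently computable by linear algebra mod $q$. With that basis in hand, Lemma~\ref{homo} does the rest.
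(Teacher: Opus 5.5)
The paper gives no proof of this lemma: it imports it from Agrawal--Boneh--Boyen, and Definition~\ref{BD} only specifies BasisDel's inputs and outputs. Your argument is essentially their construction and is correct: push $\mathbf{T_A}$ through $\mathbf{R}$ to get $m$ independent vectors of $\Lambda_q^{\perp}(\mathbf{B})$, turn them into a basis with Lemma~\ref{homo} using any efficiently computable basis of $\Lambda_q^{\perp}(\mathbf{B})$, then randomize with $\sigma$, which keeps the same lattice.

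One misstatement needs fixing. Your plan and Step~1 claim that $\mathbf{R}$ is a bijection onto $\Lambda_q^{\perp}(\mathbf{B})$ and that the embedding is ``index-preserving''; both are false. Your second bullet only shows $(\mathbf{R}^{-1} \bmod q)\,\mathbf{y} \in \Lambda_q^{\perp}(\mathbf{A})$, not $\mathbf{y} \in \mathbf{R}\,\Lambda_q^{\perp}(\mathbf{A})$, and in fact $[\Lambda_q^{\perp}(\mathbf{B}) : \mathbf{R}\,\Lambda_q^{\perp}(\mathbf{A})] = |\det \mathbf{R}|$. Nothing later in your proof depends on this, and your ``Main obstacle'' paragraph already states the correct picture.
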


\subsection{Chinese Remainder Theorem (CRT)~\cite{atiyah2018introduction}:}
\begin{theorem}\label{crth}
Consider a commutative ring $R$ with unity, and suppose $S_1, S_2, \ldots, S_N$ be ideals of $R$ that are pairwise comaximal, i.e.,
\[
S_i + S_j = R \quad \text{for all } i \neq j.
\]
Then there is a ring isomorphism
\[
\frac{R}{\bigcap_{i=1}^N S_i} \;\cong\; \prod_{i=1}^{N} \frac{R}{S_i},
\]
given by
\[
r \;\mapsto\; (r \bmod S_1, r \bmod S_2, \dots, r \bmod S_N), \quad r \in R.
\]
\end{theorem}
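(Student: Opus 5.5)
The plan is to study the ring homomorphism $\varphi : R \to \prod_{i=1}^N R/S_i$, $\varphi(r) = (r \bmod S_1, \ldots, r \bmod S_N)$, and apply the first isomorphism theorem. First, $\varphi$ is a ring homomorphism, since each projection $R \to R/S_i$ is one and a product of homomorphisms into the factors gives a homomorphism into the product ring. Second, $\ker\varphi = \{r : r \in S_i \text{ for all } i\} = \bigcap_{i=1}^N S_i$. Hence $\varphi$ induces an injective homomorphism $R/\bigcap_i S_i \to \prod_i R/S_i$. It then remains only to prove that $\varphi$ is surjective.

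\textbf{Surjectivity.} For each fixed $i$ we will produce an element $e_i \in R$ with $e_i \equiv 1 \pmod{S_i}$ and $e_i \equiv 0 \pmod{S_j}$ for every $j \neq i$.
\begin{itemize}
\item For each $j \ne i$, comaximality $S_i + S_j = R$ gives $a_j \in S_i$ and $b_j \in S_j$ with $a_j + b_j = 1$.
\item Set $e_i := \prod_{j \ne i} b_j = \prod_{j\neq i}(1 - a_j)$.
\item Each factor $b_j$ lies in $S_j$, so $e_i \in S_j$ for all $j \neq i$.
\item Expanding $\prod_{j\neq i}(1-a_j)$, every term except $1$ contains some $a_j \in S_i$, so $e_i \equiv 1 \pmod{S_i}$.
\end{itemize}
Commutativity and the existence of unity are used here. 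Given any target tuple $(r_1 + S_1, \ldots, r_N + S_N)$, the element $r := \sum_i r_i e_i$ satisfies $r \equiv r_i \pmod{S_i}$ for each $i$, so $\varphi$ is onto.

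The only genuinely nontrivial step is constructing the idempotent-like elements $e_i$. There one must use pairwise comaximality for all $j\neq i$ at once. Equivalently, one could first show $S_i + \bigcap_{j\ne i} S_j = R$ (or $S_i + \prod_{j \neq i} S_j = R$) by induction on $N$, and then take $e_i$ to be the component of $1$ lying in $\prod_{j \neq i} S_j$. Everything else, namely the homomorphism property, the kernel computation, and the final invocation of the first isomorphism theorem, is routine.
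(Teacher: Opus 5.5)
Your proof is correct and complete: the kernel computation together with the elements $e_i=\prod_{j\neq i} b_j=\prod_{j\neq i}(1-a_j)$ yields exactly the stated isomorphism. The paper does not prove this theorem itself but cites Atiyah--Macdonald, whose proof (Proposition 1.10) is exactly this argument, so your route coincides with the referenced one.
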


\subsection{Aggregate Signature}
An aggregate signature enables $N$ distinct signatures each generated by $N$ different signers on $N$ separate messages to be efficiently combined into a single compact signature. 
Following the framework introduced by Boneh \textit{et al.}~\cite{boneh2003aggregate}, 
we define syntax and security properties of \text{Aggregate Signature (\sf AS)} scheme.

\begin{definition}[Syntax]
Suppose $\lambda$ denotes the security parameter. An aggregate signature scheme is specified by a tuple of $5$-algorithms {\sf AS} = ({\sf AS.KeyGen}, {\sf AS.Sign}, {\sf AS.Verify}, {\sf AS.Aggregate}, {\sf AS.AggregateVerify}), defined as follows.

\begin{description}
\item[{\sf AS.KeyGen}$(1^\lambda)\rightarrow (pk_i,sk_i)$:] 
Given the security parameter $\lambda$, the Key Generation Centre (KGC) produces and distributes a corresponding public/secret key pair $(pk_i, sk_i)$ to the signer $i$, for each $i=1,\ldots, N$.

\item[{\sf AS.Sign}$(sk_i, \mu_i) \rightarrow v_i$:]
Using its secret key $sk_i$ and the message $\mu_i$ to be signed, the $i$-th signer generates the corresponding signature $v_i$ for $\mu_i$.
    
\item[{\sf AS.Verify}$(pk_i, \mu_i, v_i) \rightarrow 0/1$:] Provided the $i$-th signer’s public key $pk_i$ along with the message–signature pair $(\mu_i, v_i)$. If $(\mu_i, v_i)$ meets certain verification criteria, then output $1$ and accept; otherwise, output $0$ and reject. 
\item[{\sf AS.Aggregate}$(\{pk_i,\mu_i,v_i\}_{i=1}^N) \rightarrow v$:] 
For each $i = 1, \dots, N$, the aggregation algorithm takes the message–signature pairs $(\mu_i, v_i)$ along with their respective public verification keys $pk_i$, the aggregator produces the aggregate signature $v$.
    
\item[{\sf AS.AggregateVerify}$(v, \{pk_i,\mu_i\}_{i=1}^N)\rightarrow 0/1$:] 
Provided the aggregate signature $v$ together with the public keys $pk_i$ and messages $\mu_i$ for all $i = 1, \dots, N$. If the tuple $(v, \{pk_i,\mu_i\}_{i=1}^N)$ meets the specified verification conditions, produce $1$ to accept; otherwise, produce $0$ to reject.
\end{description}

\end{definition}

\noindent For the scheme \sf AS, it is essential to ensure correctness together existential unforgeability against chosen-message attack (EUF-CMA).\\

\noindent \underline{\textbf{Correctness.}}
For all \(\lambda, N \in \mathbb{N} \), for all key pairs \((pk_i, sk_i) \leftarrow \sf AS.KeyGen(1^\lambda)\), and for all messages \( \mu_i \in \mathcal{M} \), where $\mathcal{M}$ denotes the set of message space, and \(v_i \leftarrow \sf AS.Sign(sk_i,\mu_i)\) for all $i \in \{1, \ldots, N\}$:
$$\sf AS.AggregateVerify(\{pk_i\}_{i=1}^N, \{\mu_i\}_{i=1}^N, \sf AS.Aggregate(\{\sf AS.Sign(sk_i, \mu_i)\}_{i=1}^N)) = 1,$$ where $1$ denotes \texttt{True}.\\

\noindent \underline{\textbf{Security: Existential Unforgeability under Chosen Message Attack.}} Informally, the security of an aggregate signature scheme is defined by the infeasibility of an adversary successfully producing a forged aggregate signature within the parameters of a specified security game. In this context, \textit{existential forgery} refers to the ability of the adversary to create a valid aggregate signature over messages of his choosing, purportedly signed by a selected subset of users.\\
We rigorously define this notion using the aggregate chosen-key model for EUF-CMA. Within the framework of this model, an adversary $\mathcal{A}$ is given access to a single public key. The adversary's objective is to perform an existential forgery of an aggregate signature. $\mathcal{A}$ is granted the capability to select all public keys except for one \textit{challenge key}, and may also query a signing oracle associated with this challenge key. The adversary's advantage, denoted as $\text{Adv}^{\text{AggSig}}_{\mathcal{A}}$, is quantified by its probability of success in the following EUF-CMA game:

\begin{description}
\item[\textbf{Setup.}] The adversary $\mathcal{A}$ receives a public key $PK_1$, generated uniformly at random.
    
\item[\textbf{Queries.}] The adversary $\mathcal{A}$ may adaptively request signatures on messages of its choice under the challenge key $PK_1$.
    
\item[\textbf{Response.}] Eventually, $\mathcal{A}$ emits $k - 1$ extra public keys $PK_2, \ldots, PK_k$, where $k \leq N$, with $N$ being a predefined game parameter. These keys, together with $PK_1$, form the set of signers in the forged aggregate. $\mathcal{A}$ also outputs the corresponding messages $M_1, \ldots, M_k$ and an aggregate signature $v$, allegedly produced by these $k$ users.
\end{description}

\noindent The adversary wins the game if the aggregate signature $v$ is a valid existential forgery, that is, $v$ verifies correctly for the messages $M_1, \ldots, M_k$ under the public keys $PK_1, \ldots, PK_k$, and the forgery is nontrivial. That is, $\mathcal{A}$ did not make a signature request on $M_1$ under $PK_1$. The probability of $\mathcal{A}$'s success is determined based on the random choices made during the key generation and signing procedures of the scheme.

\subsection{Hierarchical Identity Based Signature}
A Hierarchical Identity-Based Signature (HIBS) lets users in an organization use their identity as their public key. They get private keys from their parent in the hierarchy and can sign messages. Anyone can verify signatures using just the signer's identity and the organization's root public key, removing the need for certificates.
Building on the framework proposed by Gentry \textit{et al.}~\cite{gentry2002hierarchical}, we describe the protocol below.

\begin{definition}[Syntax] Let $\lambda$ denote the security parameter. An HIBS scheme is defined by a tuple of five algorithms
{\sf HIBS}=(\sf HIBS.RootSetup, \sf HIBS.LowerLevelSetup, \sf HIBS.Extract, \sf HIBS.Sign, \sf HIBS.Verify)
\begin{description}
\item \textbf{\sf HIBS.RootSetup$(1^\lambda, 1^t)\rightarrow(pp, MSK)$:} The root private key generator (PKG) takes a security parameter $\lambda$ and the maximum depth $t$ as input and generates the system’s public parameters $pp$, and the master secret key $MSK$. The system’s public parameters include a description of the message space $\mathcal{M}$. $pp$ will be publicly available, while only the root PKG will know $MSK$.
\item \textbf{\sf HIBS.LowerLevelSetup$(pp,ID|_r) \rightarrow(SK_{ID|_r})$:} The algorithm takes the public parameter $pp$ and an identity $ID|_r$ of the child as input and outputs the secret key $SK_{ID|_r}$ for the given identity, which will be utilized to issue the secret signing key for the child identity $ID|_r$.
\item \textbf{\sf HIBS.Extract$(SK_{ID|_r}, ID|_k) \rightarrow SK_{ID|_k}$:} Given a parent’s secret key $SK_{ID|_r}$ for an identity $ID|_r = (ID_0, ID_1, \ldots, ID_r)$, and the identity $ID|_k$ for which secret key is to be extracted as input, the PPT algorithm produces a secret signing key $SK_{ID|_k}$ for the identity $ID|_k = (ID_0, \ldots, ID_r, \ldots, ID_k)$.
\item \textbf{\sf HIBS.Sign$(pp, SK_{ID|_k}, M) \rightarrow v$:} Using the public parameter $pp$, private key $SK_{ID|_k}$ and a message $M$ as input, the PPT algorithm generates a signature $v$ for the specified identity.
\item \textbf{\sf HIBS.Verify$(pp,ID|_k, M,v) \rightarrow (1/0)$:} Given the public parameter $pp$, a message–signature pair associated with an identity, If the signature verifies correctly, the algorithm returns $1$, if it fails verification, it returns $0$.
\end{description}
\end{definition}
\noindent For the HIBS scheme, it is crucial to guarantee correctness and Strong Unforgeability against adaptive identity and Chosen Message Attack (SU-aID-CMA) security.\\

\noindent \underline{\textbf{Correctness.}}
When $sk_{ID|_k}$ is the private key generated by the  HIBS.Extract algorithm for the identity $ID|_{k}$, then:
$$\forall M \in \mathcal{M}: \text{\sf HIBS.Verify}(pp,ID|_k, M, \text{\sf HIBS.Sign}(pp, SK_{ID|_k}, M)) = \text{valid}.$$\\
\noindent \underline{\textbf{Strong Unforgeability under adaptive identity and Chosen Message Attack.}}

\noindent The strongest security model for a Hierarchical Identity-Based Signature (HIBS) scheme is strong unforgeability against adaptive identity and chosen message attacks (SU-aID-CMA). This security notion is formalised through a security game involving a challenger $\mathcal{C}$ and an adversary $\mathcal{A}$, defined as follows:
\begin{description}
    \item \textbf{Setup:} To generate the system parameters $pp$, the Setup algorithm is executed by the challenger $\mathcal{C}$ and the adversary $\mathcal{A}$ is provided $pp$.

    \item \textbf{Extract Queries:} The adversary $\mathcal{A}$ can adaptively request private keys for any chosen identity $ID$. Upon receiving such a request, the challenger $\mathcal{C}$ executes the Extract algorithm to generate the corresponding secret key $S_{ID}$ and returns it to $\mathcal{A}$.

    \item \textbf{Sign Queries:} $\mathcal{A}$ may adaptively select an identity $ID$ along with a message $M$ and request a signature. The challenger $\mathcal{C}$ determine the signature $\nu = \text{\sf HIBS.Sign}(M, S_{ID})$ and returns it to $\mathcal{A}$.

    \item \textbf{Forgery:} Finally, the adversary $\mathcal{A}$ outputs a forged signature $(\nu^*, M^*, ID^*)$. The adversary is considered successful if
    $$\text{\sf HIBS.Verify}(\nu^*, M^*, ID^*) = 1$$
    and neither $ID^*$ nor any of its prefixes were queried during the Extract phase, and the pair $(M^*, ID^*)$ was not used in the Sign phase.
\end{description}

\noindent In summary, the adversary $\mathcal{A}$ wins the SU-aID-CMA game if it can produce a valid signature for a new message–identity pair without having obtained the corresponding private key or signature from the challenger.

\section{Hierarchical Identity-Based Signature with Designated Aggregator (HIBS-DA)} \label{HI}

Consider a hierarchical cryptographic infrastructure governed by a Root Private Key Generator (Root PKG) at level 0. The secret key and public key for the Root PKG is $MSK$ and $MPK$, respectively, where $MSK$ is defined as $SK_{ID|_{i0}}$, for each $i \in \{1, \ldots, N\}$. The system has a hierarchical depth $t$, comprising $t$ distinct levels beneath the root. These subordinate levels are sequentially indexed from $1$ (immediately below the root) to $t$ (the leaf level). Each hierarchical level $k$ ($1 \leq k \leq t$) contains a fixed population of $N$ entities, referred to as nodes or individuals.
Let $ID_{ik}$ denote the identity of $i$-th entity at hierarchical level $k$, where
$k \in \{1, 2, \dots, t\}$ denotes the hierarchical level and $i \in \{1, 2, \dots, N\}$ denotes the entity index within the level. A trapdoor delegation mechanism based on BasisDel (refer \ref{BD}) allows a parent in the hierarchy to securely derive a child’s private key without revealing its own trapdoor. This supports arbitrary depth up to $t+1$ while preserving security. The primary objective of this scheme is to perform cryptographic aggregation of digital signatures originating from a specific hierarchical level $k$, for some $1 \leq k \leq t$. This process aims to consolidate the $N$ distinct signatures generated by all members at level $k$ into a single, compact cryptographic attestation.
The aggregation mechanism proceeds as follows: Each entity $ID|_{ik}$, where $ID|_{ik}$ is defined as $ID|_{ik} = (ID_{i1}, \ldots, ID_{ik})$, for all $1 \leq k \leq N$ at hierarchical level $k$, generates signatures on their messages $M_{ik}$ using their respective private key $SK_{ID|_{ik}}$.
A designated aggregator collects all $N$ signatures and computes an aggregate signature $v_k$ for the hierarchical level $k$.
The aggregate signature $v_k$ can be verified against the public key corresponding to all $N$ members of level $k$. Upon completion, the Root PKG (or any authorized entity) possesses the aggregate signature $v_k$.

\begin{definition}[Syntax] Let $\lambda$ be the security parameter. An HIBS-DA scheme is formally defined by six algorithms
{\sf HIBS-DA}=(\sf HIBS-DA.Setup, \sf HIBS-DA.Derive, \sf HIBS-DA.Sign, \sf HIBS-DA.Verify, \sf HIBS-DA.Aggregate, \sf HIBS-DA.AggregateVerify).

\begin{description}
\item \textbf{\sf HIBS-DA.Setup($1^{\lambda},1^{t}) \rightarrow (pp, MPK, MSK)$:} Given a security parameter $\lambda$ and the maximum depth $t$, the setup algorithm generates the system public parameters $pp$, the master public key $MPK$ and the master secret key $MSK$ for the root PKG.
\item \textbf{\sf HIBS-DA.Derive($pp, SK_{ID|_{ir}}, ID|_{ik}) \rightarrow SK_{ID|_{ik}}$:} Given the system public parameter $pp$, private key $SK_{ID|_{ir}}$ corresponding to an identity  $ID|_{ir} = (ID_{i1}, \ldots, ID_{ir})$ of a parent and an identity $ID|_{ik} = (ID_{i1}, \ldots, ID_{ir}, \ldots, ID_{ik})$ of a child as input, the algorithm produces a secret key $SK_{ID|_{ik}}$ for the identity $ID|_{ik}$.
\item \textbf{\sf HIBS-DA.Sign$(ID|_{ik},M_{ik}, \mathsf{SK}_{ID|_{ik}}) \rightarrow v_{ik}$:} Given an identity $ID|_{ik}$, a private key $\mathsf{SK}_{ID|_{ik}}$ corresponding to the identity $ID|_{ik}$ and a message $M_{ik}$, the sign algorithm produces a signature $v_{ik}$ for the identity $ID|_{ik}$.
\item \textbf{\sf HIBS-DA.Verify($pp, {ID|_{ik}}, M_{ik}, v_{ik}) \rightarrow (1/0)$:} Given system public parameter, corresponding identity $ID|_{ik}$, message $M_{ik}$,  and signature $v_{ik}$, outputs $1$ if the signature is valid and $0$ otherwise.

\item \textbf{\sf HIBS-DA.Aggregate \big($\{M_{ik}\}_{i=1}^N ,\{ v_{ik}\}_{i=1}^N\big) \rightarrow v_k$:} An algorithm that, given message-signature pairs $(M_{ik}, v_{ik})$, for each $i= 1, \ldots, N$, the honest designated aggregator produces the aggregate signature $v_k$.

    \item \textbf{\sf HIBS-DA.AggregateVerify($\{M_{ik}\}_{i=1}^N, v_k, MPK) \rightarrow (1/0)$:} On input messages  $\{M_{ik}\}_{i=1}^N$, aggregate signature $v_k$, and the master public key $MPK$. It yields $1$, provided that the signature is valid. Otherwise, $0$.
\end{description}
\end{definition}
\noindent The HIBS-DA scheme must achieve correctness and Existential Unforgeability under Adaptive Identity and Chosen Message Attack security model.
\subsection{Correctness}
An HIBS-DA scheme is correct if for all security parameters $\lambda$, all hierarchical depths $t$, all identity tuples $ID|_{ik} = (ID_{i1}, \ldots, ID_{ik})$, and messages $M_{ik} \in \mathcal{M}$:
\begin{enumerate}
    \item \text{Individual Verification:} For any valid $(pp, MPK, MSK) \leftarrow \text{\sf HIBS-DA.Setup}(\lambda, t)$ and $SK_{ID|_{ik}} \leftarrow \text{\sf HIBS-DA.Derive}(pp, SK_{ID|_{ir}}, ID|_{ik})$:
    \[
    \Pr\left[\text{\sf HIBS-DA.Verify}\left(pp, ID|_{ik}, M_{ik}, v_{ik}\right) = 1\right] = 1
    \] 
    where $v_{ik} \leftarrow \text{\sf HIBS-DA.Sign}(ID|_{ik}, M_{ik}, SK_{ID|_{ik}})$
    
    \item \text{Aggregate Verification:} For any set of $N$ valid pairs $\{(M_{ik}, v_{ik})\}_{i=1}^N$:
    \[
    \Pr\left[\text{\sf HIBS-DA.AggregateVerify}\left(\{M_{ik}\}_{i=1}^N, v_k, MPK\right) = 1\right] = 1
    \]
    where $v_k \leftarrow \text{\sf HIBS-DA.Aggregate}\left(\{M_{ik}\}_{i=1}^N, \{v_{ik}\}_{i=1}^N\right)$.
\end{enumerate}
\subsection{Security Model}
The standard security requirement of the HIBS-DA scheme is Existential Unforgeability against Adaptive Identity and Chosen Message Attack (EUF-ID-CMA):
We define the security of our HIBS-DA scheme through a game played between a challenger $\mathcal{C}$ and a probabilistic polynomial-time (PPT) adversary $\mathcal{A}$, where the adversary aims to forge a valid aggregate signature for a specific hierarchical branch $k$ within the scheme.
\begin{definition}\textbf{Security Against Existential Unforgeable under Adaptive Identity and Chosen Message Attack (EUF-ID-CMA)}.
An HIBS-DA scheme is said to be 
EUF-ID-CMA secure
if for all PPT adversaries $\mathcal{A}$, the probability that $\mathcal{A}$ wins the above game is negligible in the security parameter $\lambda$:
$$\text{Adv}^{\text{HIBS-DA}}_{\mathcal{A}}(\lambda)  =\Pr[\mathcal{A} \text{ wins the EUF-ID-CMA game}] \leq \text{negl}(\lambda)$$    
\end{definition}
\noindent The security game is defined as follows:
\begin{description}
\item \textbf{Setup.} 
The challenger $\mathcal{C}$ runs the setup algorithm $(MPK, MSK) \leftarrow \textsf{Setup}(1^\lambda,1^t)$, where $\lambda$ is the security parameter and $t$ is the maximum hierarchy depth.
The challenger gives the master public key $MPK$ to the adversary $\mathcal{A}$ and keeps the master secret key $MSK$ private.

\item \textbf{Query Phase.} 
    The adversary $\mathcal{A}$ may adaptively make the following oracle queries:
\item[(i)] \textbf{Key Extraction Query:} 
$\mathcal{A}$ may request the private key for any hierarchical identity 
$ID|_{ik} = (ID_{i1}, \ldots, ID_{ik})$.
The challenger returns the corresponding secret key 
$SK_{ID|_{ik}}$.
$\mathcal{A}$ is not allowed to extract the secret key belonging to the target identity or any among its ancestors.

\item[(ii)] \textbf{Signing Query:} 
$\mathcal{A}$ may request a signature on any message $M_{ik}$ under a valid identity $ID|_{ik}$ except the target message.
The challenger responds with a valid signature
$v_{ik} \leftarrow \textsf{Sign}(SK_{ID|_{ik}}, M_{ik})$.

\item[(iii)] \textbf{Aggregate (Designated Aggregator) Query:} 
$\mathcal{A}$ may request the designated aggregator to combine multiple signatures 
$\{v_{ik}\}_{i=1}^{|S|}$ on messages $\{M_{ik}\}_{i=1}^{|S|}$ corresponding to identities $S = (ID|_{1k},\ldots, ID|_{{|S|}k})$.
The challenger responds with 
\[v_{k,S} \leftarrow \textsf{Aggregate}(\{M_{ik}, v_{ik}\}_{i=1}^{|S|}).\]
The adversary is restricted so that no identity appears with different messages in overlapping aggregate queries.

\item \textbf{Forgery.} 
    Finally, $\mathcal{A}$ outputs a candidate forgery $$(S^*, \{M_{ik}^*\}_{i \in S^*}, \{ID_{ik}^*\}_{i \in S^*}, v_{k,S^*}^*),$$ where $S^* = (ID^*|_{1k}, \ldots, ID^*|_{|S^*|k})$ is a set of hierarchical identities,
    $\{M_{ik}^*\}$ are messages, and $v_{k,S^*}^*$ is an aggregate signature, where $k$ is at most $N$.

\item \textbf{Winning Condition.}
    The adversary $\mathcal{A}$ wins the game if both of the following hold:
    \begin{itemize}
\item[(i)] The forged aggregate signature verifies correctly:
\[\textsf{Verify}(MPK, S^*, \{M_{ik}^*\}_{i=1}^{|S^*|}, v_{k,S^*}^*) = 1\].
\item[(ii)] There exists at least one identity $ID^*|_{jk} \in S^*$ such that
$\mathcal{A}$ did not query the key extraction oracle for $ID^*|_{jk}$ or any of its ancestors, and
$\mathcal{A}$ did not obtain a signature on message $M_{jk}^*$ for $ID^*|_{jk}$ in any prior signing or aggregation query.
\end{itemize}
\end{description}
\noindent In short, the HIBS-DA scheme is considered secure if no efficient adversary can generate a valid aggregate signature that includes at least one new identity–message pair for which it has never obtained the corresponding private key or legitimate signature.
                                           
\section{Our Construction} \label{CONS}
Consider a hierarchical cryptosystem of depth $t$, a Root PKG at level 0 governs $t$ lower levels. Each level $k$ (from $1$ to $t$) consists of $N$ entities, where each entity signs its message $M_{ik}$ with its private key $SK_{ID|_{ik}}$. A designated aggregator collects all $N$ signatures from level $k$ and computes a single aggregate signature $v_k$.\\ 
The HIBS-DA scheme is constructed below.\\

\noindent Let $q \geq 3$ be a prime and $m \geq 6n\log q$. Define $ID_0 = ID_{i0}$ is the root PKG’s identity and $ID|_{ik} = (ID_{i1}, \dots, ID_{ik})$ denotes the identity of $k$-th level entity of branch $i$ for each $i = {1,2, \ldots, N}$ . Let
$H : \{0,1\}^* \;\longrightarrow\; \mathbb{Z}_q^{m \times m}$ and $\sigma_{ik}, \sigma \geq \omega (\log\sqrt{m})$. The lattices $\mathbb{L}^{ik}$, for each $i = 1, 2, \ldots, N$, and for fixed $k \in \{1,2,\ldots,t\}$, are defined in (refer\ref{crt}).\\
Below, we detail our lattice-based HIBS-DA scheme.:\\

\noindent \fbox{\sf {HIBS-DA.Setup}$(1^{\lambda},1^{t}) \rightarrow ( MPK, MSK)$}\\

\noindent On input the security parameter $\lambda$ and the hierarchy depth $t$, a trusted third party does the following:\begin{enumerate}
    \item[(i)] Run $\textsf{TrapGen}(1^\lambda)$ to derive a matrix $\mathbf{A} \in \mathbb{Z}_q^{n \times m}$ and a short generating basis $\mathbf{T_A} \in \mathbb{Z}_q^{m \times m}$ for the lattice $\Lambda_q^\perp(\mathbf{A})$.
    \item[(ii)] Disclose the master public key $MPK$ = $\mathbf{A}$ and retain the master secret key $MSK$ = $\mathbf{T_A}$ confidential. 
    \item[(iii)] Consider ${SK}_{ID|_{i0}} = \mathbf{T_A}$ and $P_{ID|_{i0}} = \mathbf{A}$, for each $i \in \{1, \ldots, N\}$, are the secret key and public key of the root PKG, respectively. 
    \item[(iv)] Generate a basis $\mathcal{T}$ for the lattice $\mathbb{L}^{1k} \cap \mathbb{L}^{2k}\cap \ldots \cap \mathbb{L}^{Nk}$ by using Lemma \ref{homo} and give the basis $\mathcal{T}$ to the designated aggregator.
\end{enumerate}
\fbox{\sf {HIBS-DA.Derive}$(P_{ID|_{i0}}, SK_{ID|_{ir}}, \text{ID}|_{ik}) \rightarrow (SK_{ID|_{ik}})$}\\

\noindent On input the public key $P_{ID|_{i0}}$ of root PKG, a secret signing key $SK_{ID|_{ir}}$ corresponding to a “parent” identity $\text{ID}|_{ir} = (\text{ID}_{i1}, \ldots, \text{ID}_{ir})$, and a “child” identity $\text{ID}|_{ik} = (\text{ID}_{i1}, \ldots, \text{ID}_{ir}, \ldots, \text{ID}_{ik})$ where $k \leq t$:

\begin{enumerate}
    \item[(i)] Let $\mathbf{R_{\text{ID}|_{ir}}} = H(\text{ID}_{ir}) \cdots H(\text{ID}_{i2}) H(\text{ID}_{i1}) \in \mathbb{Z}_q^{m \times m}$ and\\
    \hspace*{1.5em}$P_{\text{ID}|_{ir}} = \mathbf{A} \mathbf{R^{-1}_{\text{ID}|_{ir}}} \in \mathbb{Z}_q^{n \times m}$. Then $SK_{\text{ID}|_{ir}}$ is a short reduced basis (refer \ref{bd}) for $\Lambda_q^\perp(P_{\text{ID}|_{ir}})$.
    \item[(ii)] Compute $\mathbf{R_{ik}} = H(\text{ID}_{ik}) \cdots H(\text{ID}_{i{(r+1)}}) \in \mathbb{Z}_q^{m \times m}$ and set $P_{\text{ID}|_{ik}} = P_{\text{ID}|_{ir}} \mathbf{R^{-1}_{ik}}$.
    
    \item[(iii)] Evaluate $\mathbf{S'_{ik}} \leftarrow \textsf{BasisDel}(P_{\text{ID}|_{ir}}, \mathbf{R_{ik}}, SK_{\text{ID}|_{ir}}, \sigma_{ik})$ to get a short, randomly generated basis for $\Lambda_q^\perp(P_{\text{ID}|_{ik}})$.
    
    \item[(iv)] Output the private key obtained through delegation $SK_{\text{ID}|_{ik}} = S'_{ik}$.

    \item[(v)] $SK_{\text{ID}|_{ik}} = S'_{ik}$ and $P_{\text{ID}|_{ik}} = P_{\text{ID}|_{ir}} \mathbf{R_{ik}^{-1}}$ are secret key and public key, respectively corresponding to the identity $ID|_{ik}$
    \end{enumerate}

\noindent \fbox{\sf {HIBS-DA.Sign}$(\text{ID}|_{ik}, SK_{ID|_{ik}}, M_{ik}) \rightarrow v_{ik} $}\\

\noindent Given a user $\text{ID}|_{ik}$ holding a secret signing key $\mathbf{SK}_{ID|_{ik}}$ along with a message $M_{ik} \in \{0,1\}^*$.
\begin{enumerate}
\item[(i)] Choose a string $b_{ik} \in \{0,1\}^n$, for $i = 1,2, \ldots, N$.
        \item[(ii)] Compute $v_{ik} \leftarrow \text{SamplePre}(P_{ID|_{ik}}, SK_{ID|_{ik}}, \sigma_{ik}, h(M_{ik}, b_{ik}, \text{ID}|_{ik}))$,\\
        where $h \colon \{0,1\}^* \to \mathbb{Z}_q^n$ is a hash function. Moreover, the resulting signature $v_{ik}$ satisfies $P_{ID|_{ik}} v_{ik}=h(M_{ik}, b_{ik}, \text{ID}|_{ik})$ and $\|v_{ik}\| \leq \sigma_{ik}\sqrt{m}$.

        \item[(iii)] Output the individual signature $v_{ik}$.
    \end{enumerate}
    
\noindent \fbox{\sf {HIBS-DA.Verify}$(P_{ID|_{ik}}, \text{ID}|_{ik}, v_{ik}, M_{ik}) \rightarrow \text{accept/reject}$}\\

\noindent For each $i = 1,2, \dots N$, on input the public key $P_{ID|_{ik}}$, identitity $\text{ID}|_{ik}$, where $k \in \{1,2, \dots t\}$, a signature $v_{ik}$ and a message $M_{ik}$, the verifier will accept the signature precisely when\\
$P_{ID|_{ik}} v_{ik} = h(M_{ik}, b_{ik}, \text{ID}|_{ik})$ and $\|v_{ik}\| \leq \sigma_{ik} \sqrt{m}$.\\

\noindent \fbox {\sf {HIBS-DA.Aggregate}$(\{M_{ik}\}_{i=1}^N,\{v_{ik}\}_{i=1}^N,{\mathcal{T}})\rightarrow v_k$}\\  

\noindent The honest designated aggregator performs the following operations to combine $N$ message-signature pairs $(M_{ik}, v_{ik})$ by using Chinese Remainder Theorem \ref{crth}.
 \begin{enumerate}
   \item[(i)] Computes $x \in \mathbb{Z}^m$ such that\\
\( x\equiv v_{1k} \pmod{\mathbb{L}^{1k}}\)\\
\( x \equiv v_{2k} \pmod{\mathbb{L}^{2k}} \)\\
\vdots\\
\( x \equiv v_{Nk} \pmod{\mathbb{L}^{Nk}} \)\\
\item[(ii)] This $x$ may not be short. 
\end{enumerate}
To get a short signature, the designated aggregator does the following: 
\begin{itemize}
\item[(a)] Output $v_k \leftarrow \text{SamplePre}(\mathbb{L}^{1k} \cap \mathbb{L}^{2k}\cap \ldots \cap \mathbb{L}^{Nk} ,{\mathcal{T}},x, \sigma)\in (\mathbb{L}^{1k} \cap \mathbb{L}^{2k}\cap \ldots \cap \mathbb{L}^{Nk}) +x$.
\item[(b)] The hierarchical identity-based signature with designated aggregator (HIBS-DA) is $v_k$.
\end{itemize}
 
\noindent \fbox{\sf {HIBS-DA.AggregateVerify}$(MPK, \text{ID}|_{ik}, v_k, \{M_{ik}\}_{i=1}^N)\rightarrow \text{accept/reject}$}\\

\noindent On input the master public key $\mathbf{A}$, identities $\text{ID}|_{ik}$, aggregate signature $v_k$ and messages $\{M_{ik}\}_{i=1}^N$, The verifier will accept the signature only when\\
     $\sum_{i=1}^N A[H(\text{ID}_{ik}), \dots, H(\text{ID}_{i1})]^{-1} v_k = \sum_{i=1}^N h(M_{ik}, b_{ik}, \text{ID}|_{ik})$ and $\|v_k\| \leq \sigma \sqrt{m}$.

\subsection{Correctness}
The correctness of SamplePre ensures that each $v_{ik}$ is short and satisfies the equation  
\begin{align*}
P_{\text{ID}|_{ik}} v_{ik} 
&= P_{\text{ID}|_{ir}} R_{ik}^{-1} v_{ik} \\[6pt]
&= A R_{\text{ID}|_{ir}}^{-1} R_{ik}^{-1} v_{ik} \\[6pt]
&= A \big[ H(\text{ID}_{ir}), \ldots, H(\text{ID}_{i1}) \big]^{-1}
      \big[ H(\text{ID}_{ik}), \ldots, H(\text{ID}_{i(r+1)}) \big]^{-1}v_{ik} \\[6pt]
&= A \big[ H(\text{ID}_{i1})^{-1}, \ldots, H(\text{ID}_{ir})^{-1}\big] \big[ H(\text{ID}_{i(r+1)})^{-1}, \ldots, H(\text{ID}_{ik})^{-1}\big]v_{ik} \\[6pt]
&= A \big[ H(\text{ID}_{ik}), \ldots, H(\text{ID}_{i1}) \big]^{-1} v_{ik}  \\[6pt]
&= h(M_{ik}, b_{ik}, \text{ID}_{ik}) .
\end{align*}

 As a result, the correctness of each individual signature is verified.

The solution derived from these equations\\
\begin{equation} \label{fg1}
\begin{cases}
 x\equiv v_{1k} \pmod{\mathbb{L}^{1k}}\\
  x \equiv v_{2k} \pmod{\mathbb{L}^{2k}}\\

\vdots\\
 x \equiv v_{Nk} \pmod{\mathbb{L}_{Nk}}\\
\end{cases}
\end{equation}
is a vector $x$. According to the CRT (refer \ref{crth}), $x$ corresponds uniquely to the tuple $( v_{1k}, \ldots, v_{Nk})$. However, $x$ might not be short. To shorten, we apply algorithm $\text{SamplePre}((\mathbb{L}^{1k} \cap \mathbb{L}^{2k}\cap \ldots \cap \mathbb{L}^{Nk}) ,\mathcal{T}, {x}, \sigma)$ to get $v_k$. This $v_k$ is short enough. i.e.,  $\|v_k\| \leq \sigma \sqrt{m}$, with $v_k$ being in the same coset as $x$.
It follows that $v_k$ satisfies equation \ref{fg1} and is of short length. i.e.,

\[
 v_k \bmod \mathbb{L}^{ik}= v_{ik} \bmod \mathbb{L}^{ik}.
\] 
We now observe that
\[\sum_{i=1}^N A[H(\text{ID}_{ik}), \ldots, H(\text{ID}_{i1})]^{-1} v_k = \sum_{i=1}^N h(M_{ik}, b_{ik}, \text{ID}|_{ik}).\]
Hence, $v_k$ authenticates all the messages $M_{ik}$, for $i = 1,2, \dots N$ and for some fixed $k \in \{1, \ldots, t\}.$\\

\subsection{Security Proof}
\begin{theorem}
If the SIS problem is computationally infeasible, then the proposed HIBS-DA scheme is existentially unforgeable under adaptive identity and chosen message attack (EUF-ID-CMA) secure. 
\end{theorem}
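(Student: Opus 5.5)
We reduce SIS to forging. Suppose a PPT adversary $\mathcal{A}$ wins the EUF-ID-CMA game with non-negligible advantage $\epsilon$. We build a simulator $\mathcal{B}$ that receives an SIS instance $\mathbf{A}_0 \in \mathbb{Z}_q^{n\times m}$ and outputs a short nonzero $\mathbf{e}$ with $\mathbf{A}_0\mathbf{e}=0 \bmod q$. The strategy is the Agrawal--Boneh--Boyen style random-oracle simulation on which the Tian et al.\ HIBS is based. $H$ and $h$ are programmed as random oracles, and the aggregate query is answered by $\mathcal{B}$ running the honest aggregator, since $\mathcal{B}$ itself generates the intersection-lattice basis $\mathcal{T}$ in Setup.

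\textbf{Setup and oracle programming.} $\mathcal{B}$ guesses a target branch $j^*\in\{1,\dots,N\}$ and a target identity path $ID^*|_{j^*k}$ among the $Q_H$ hash queries; this loses a polynomial factor. For each target component $ID^*_{j^*\ell}$ it runs SampleR to obtain $\mathbf{R}^*_\ell \sim \mathcal{D}_{m\times m}$. It then sets $\mathbf{A}=\mathbf{A}_0\,\mathbf{R}^*_k\cdots\mathbf{R}^*_1$ and programs $H(ID^*_{j^*\ell})=\mathbf{R}^*_\ell$. With this choice, $P_{ID^*|_{j^*k}}=\mathbf{A}_0$. For every other identity component, $\mathcal{B}$ uses SampleRwithBasis to output $H(ID)=\mathbf{R}$ together with a trapdoor for $\mathbf{A}\mathbf{R}^{-1}$, or for the appropriate product. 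This lets it answer key extraction for every non-ancestor of the target by BasisDel (Lemma \ref{bd}). By the statistical closeness of SampleR and SampleRwithBasis to $\mathcal{D}_{m\times m}$, $\mathbf{A}$ and all $H$ outputs are indistinguishable from the real ones.

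\textbf{Signing queries.} For any identity other than the target, $\mathcal{B}$ signs with the trapdoor. For the target identity it uses the standard GPV trick: it samples $v\leftarrow$ SampleDom and programs $h(M,b,ID^*|_{j^*k}):=P_{ID^*|_{j^*k}}v=\mathbf{A}_0 v$. By the domain-sampling property this value is uniform, so the simulation is faithful. Aggregate queries are answered exactly as in the real scheme, using $\mathcal{T}$ and SamplePre.

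\textbf{Extraction.} Consider a valid forgery $v^*_k$ in which the fresh pair $(ID^*|_{jk},M^*_{jk})$ has $j=j^*$ (the guess). $\mathcal{B}$ first programs $h(M^*_{j^*k},b^*,ID^*|_{j^*k})=\mathbf{A}_0\mathbf{s}$ for a SampleDom vector $\mathbf{s}$, chosen when that $h$-query was made. The key step, and the one I expect to be the main obstacle, is to convert the aggregate verification equation into an equation involving only the target branch. Verification gives only $\sum_i P_{ID|_{ik}}v^*_k=\sum_i h_i$, and $v^*_k\equiv v_{ik}\bmod \mathbb{L}^{ik}$ has to be exploited. I would take $\mathbb{L}^{ik}$ to be $\Lambda^\perp_q(P_{ID|_{ik}})$, as in Appendix \ref{crt}. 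Then each summand $P_{ID|_{ik}}v^*_k$ depends only on the coset, so the per-branch equations $P_{ID|_{ik}}v^*_k=h_i$ hold. In particular $\mathbf{A}_0 v^*_k=\mathbf{A}_0\mathbf{s}$. If the per-branch equations cannot be justified from the sum alone, I would program the $h$-values of the non-target branches to be known combinations $P_{ID|_{ik}}\mathbf{s}_i$ and subtract; this yields an SIS solution for the concatenated matrix, which is again reducible to SIS on $\mathbf{A}_0$.

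Setting $\mathbf{e}=v^*_k-\mathbf{s}$, we get $\mathbf{A}_0\mathbf{e}=0$ and $\|\mathbf{e}\|\le(\sigma+\sigma_{j^*k})\sqrt m$. By preimage min-entropy (Definition \ref{meh}), $\mathbf{s}$ is hidden from $\mathcal{A}$ given $\mathbf{A}_0\mathbf{s}$, so $\mathbf{e}\neq 0$ except with probability $2^{-\omega(\log n)}$. The overall success probability is then at least $\epsilon/(N\cdot Q_H^{t}\cdot Q_h)$ minus negligible terms, which contradicts the hardness of SIS.
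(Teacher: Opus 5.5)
Your simulation skeleton is essentially the paper's: programming $H$ via SampleR/SampleRwithBasis so that the target's matrix is $\mathbf{A}_0$, answering target signing queries by programming $h$ with SampleDom, and invoking preimage min-entropy for $\mathbf{e}\neq\mathbf{0}$. You have also located the right crux. But that step fails, and neither of your two resolutions repairs it. (The paper does not repair it either: it simply asserts $\mathbf{A}_0(v_k-v_k^*)=\mathbf{0}$ after ``comparing'' the forgery with a legitimate aggregate, so there is nothing to borrow.)

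AggregateVerify checks only $\bigl(\sum_i P_{ID|_{ik}}\bigr)v_k^*=\sum_i h_i$ and $\|v_k^*\|\le\sigma\sqrt m$. The congruences $v_k\equiv v_{ik}\pmod{\mathbb{L}^{ik}}$ describe how the honest aggregator builds $v_k$; a forger is under no obligation to satisfy them. So $\mathbf{A}_0v_k^*=\mathbf{A}_0\mathbf{s}$ does not follow, whatever $\mathbb{L}^{ik}$ is. (Also, Appendix~\ref{crt} defines $\mathbb{L}^{ik}=\Lambda_q^\perp(P_{ID|_{ik}})+\langle\mathbf{e}_j : j\neq i\rangle$, not $\Lambda_q^\perp(P_{ID|_{ik}})$. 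Since $P_{ID|_{ik}}\mathbf{e}_j\neq\mathbf{0}$, the value $P_{ID|_{ik}}x$ is not an invariant of the class of $x$, and with overwhelming probability that lattice is all of $\mathbb{Z}^m$.)

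Your fallback produces $\sum_i P_{ID|_{ik}}(v_k^*-\mathbf{s}_i)=\mathbf{0}$, a short kernel vector of $[P_{ID|_{1k}}|\cdots|P_{ID|_{Nk}}]$. This does not reduce to SIS on $\mathbf{A}_0$ in this simulation. First, $\mathcal{B}$ holds trapdoors for every non-target block (it needs them for extraction queries, and $\mathcal{A}$ may hold them too), so short kernel vectors of the concatenation with any prescribed short $\mathbf{A}_0$-component are easy to produce. Second, writing $P_{ID|_{ik}}=\mathbf{A}_0\mathbf{M}_i$ with $\mathbf{M}_i=\mathbf{R}^*_k\cdots\mathbf{R}^*_1\mathbf{R}_{ID|_{ik}}^{-1}$ only gives $\mathbf{A}_0\sum_i\mathbf{M}_i(v_k^*-\mathbf{s}_i)=\mathbf{0}$, and that vector is not short because $\mathbf{R}^{-1}\bmod q$ is not short. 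The min-entropy of $\mathbf{s}_{j^*}$ buys nonzeroness, not shortness.

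Your handling of aggregate queries fails as well. Under your reading, $\mathcal{T}$ is a short basis of $\bigcap_i\Lambda_q^\perp(P_{ID|_{ik}})\subseteq\Lambda_q^\perp(\mathbf{A}_0)$, i.e.\ a trapdoor for your own SIS challenge. So $\mathcal{B}$ cannot generate it, and certainly not at Setup, since the $P_{ID|_{ik}}$ are fixed only as $H$ gets programmed. Nor can the oracle be faked. Suppose $\mathcal{A}$ submits a set containing a target signature $v_{j^*k}$ obtained from the signing oracle, which $\mathcal{B}$ knows because it is its own SampleDom vector. An honest answer $v_k$ then satisfies $v_k-v_{j^*k}\in\Lambda_q^\perp(\mathbf{A}_0)$ with $\|v_k-v_{j^*k}\|\le(\sigma+\sigma_{j^*k})\sqrt m$, and this difference is nonzero with high probability. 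Hence $\mathcal{B}$ can answer faithfully only if it has already solved its SIS instance.

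What is missing is therefore not a sharper extraction argument but a scheme (or model) whose verification equation isolates the target branch and whose aggregation oracle can be simulated without a trapdoor for the target. A smaller point: the scheme hashes components $H(ID_{i\ell})$ rather than prefixes as in the Agrawal--Boneh--Boyen simulation. So ``the appropriate product'' in your SampleRwithBasis step is ill-defined when a component occurs at different depths or under different ancestors; you need $H$ applied to prefixes.
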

\begin{proof}
Suppose $\mathcal{A}$ is a probabilistic polynomial-time (PPT) adversary that aims to break the security of the HIBS-DA scheme, where $H$ and $h$ both are treated as random oracles, and let $Q_H$ and $q_h$ be the total queries performed by $\mathcal{A}$ respectively and $t$ be the maximum hierarchy depth, and suppose $\mathcal{A}$ forges the signature for the level $k$ (let), for $1 \leq k \leq t$, that is $\mathcal{A}$ forges the aggregate signature produced by the identities \{$ID|_{1k}, \ldots, ID|_{Nk}$\}, then there is a PPT algorithm $\mathcal{C}$ that solves the SIS problem.
We employ $\mathcal{A}$ to build an algorithm that solves the SIS problem.\\

\noindent \textbf{Setup.}
$\mathcal{C}$ sets up an environment that imitates attack conditions for $\mathcal{A}$ in the following manner.

\begin{enumerate}
    \item Pick $N \cdot t$ integers $Q^*_{i1}, \ldots, Q^*_{it} \in [Q_H]$, where $i \in \{1, \ldots, N\}$ uniformly at random and $Q_H$ specifies the maximum number of queries to $H$ that $\mathcal{A}$ can submit for each $i$.
    \item Sample $N \cdot t$ matrices $R^*_{i1}, \ldots, R^*_{it} \sim D_{m \times m}$ uniformly at random by executing $R^*_{ij} \leftarrow \mathsf{SampleR}(1^m)$ for $j = 1, \ldots, t$ and $i = 1, \ldots, N$.
    \item Construct a random matrix $\mathbf{A}_{0} \in \mathbb{Z}_q^{n \times m}$.
    \item Pick $w \in [t]$ uniformly at random and assign 
$\mathbf{A}_i \leftarrow \mathbf{A}_{0}R^*_{iw} \cdots R^*_{i1}, \quad i = 1, \ldots, N.$
Because $\mathbf{A}_0$ is uniform in $\mathbb{Z}_q^{n \times m}$ and all $R^*_{ij}$ are invertible modulo $q$, each $\mathbf{A}_i$ is uniformly distributed over $\mathbb{Z}_q^{n \times m}$.
\item Issue parameter $pp = \mathbf{A}_i$ publicly.
\end{enumerate}
\textbf{Random-oracle hash queries $(H)$.}
The adversary $\mathcal{A}$ is permitted to, at any time, adaptively query the random oracle $H$ on any identity
$ID|_{ik} = (ID_{i1}, \ldots, ID_{ik})$ of its choosing.
The challenger $\mathcal{C}$ handles the $Q_H$-th distinct query as described below.
For simplicity, we assume that all queries are unique; if $\mathcal{A}$ repeats a query, the simulator returns the previously stored output and does not increment the counter $ Q_H$.\\

\noindent Let $i = |(ID|_{ik})|$ denotes the identity's $ID|_{ik}$ depths. When such a query corresponds to $Q_{ik}^*$ (meaning $Q_H = Q_{ik}^*$), we proceed to assign $H(ID|_{ik}) \leftarrow R_{ik}^*$ and output $H(ID|_{ik})$.\\
In all other cases, when $Q_H \neq Q_{ik}^*$:
\begin{enumerate}
\item Evaluate $\mathbf{A}_{ik} = \mathbf{A_i} \cdot (R^*_{(i-1)k} \cdots R^*_{2k} R^*_{1k})^{-1} \in \mathbb{Z}_q^{n \times m}$ (where $\mathbf{A}_{1k} = \mathbf{A}_1$).
\item Execute $\mathsf{SampleRwithBasis}(\mathbf{A}_{ik})$ to generate $R_k \sim D_{m \times m}$ uniformly at random and a concise basis $\mathbf{T}_{\mathbf{B}_k}$ for $\mathbf{B}_k = \mathbf{A}_{ik} R_k^{-1} \bmod q$.
\item Store the $5$-tuple $(i, ID|_{ik}, R_k, \mathbf{B}_k, \mathbf{T}_{\mathbf{B}_{k}})$ for later employ, and output $H(ID|_{ik}) \leftarrow R_k$.
\end{enumerate}
\textbf{Secret key queries.} The adversary $\mathcal{A}$ adaptively make interactive queries that extract keys for arbitrary identities $ID|_{ik}$, of its choice. $\mathcal{C}$ responds a query on $ID|_{mk} = (ID_{1k}, ID_{2k}, \ldots, ID_{mk})$ of length $|(ID|_{mk})| = m \in [t]$ as detailed below.

\begin{enumerate}
    \item Let us suppose $j \in [m]$ denote the earliest level where $H(ID|_{jk}) \neq R_{jk}^*$. If, in the unexpected case, $H(ID|_{jk}) = R_{jk}^*$ for every $j = 1, \ldots, m$, 
the simulator terminates, at which point the process does not succeed.
    
    \item Recover the previously stored tuple $(j, ID|_{jk}, R_k, \mathbf{B}_k, \mathbf{T}_{\mathbf{B}_k})$ from the recorded query history of the hash oracle $H$.
This tuple arises from answering a query to $H(ID|_{jk})$.
(Without loss of generality, we may suppose that any extraction query on $ID|_{mk}$
is introduced by queries to the hash oracle across all prefixes of $ID|_{mk}$.)
As a consequence of the construction,
\[
\mathbf{B}_k = \mathbf{A}_j \cdot (R_{1k}^*)^{-1} \cdots (R_{(j-1)k}^*)^{-1} \cdot H(ID|_{jk})^{-1} \mod q,
\]
and $\mathbf{T}_{\mathbf{B}_k}$ constitutes a concise basis of $\Lambda_q^\perp(\mathbf{B}_k)$.
    
Observe that $\mathbf{B}_k$ is precisely the signature matrix $P_{ID|_{jk}}$ (as specified by the signing algorithm) for identity $ID|_{jk} = (ID_{1k}, ID_{2k}, \ldots, ID_{jk})$ of the ancestor and therefore $\mathbf{T}_{\mathbf{B}_k}$ is a trapdoor for $\Lambda_q^\perp(P_{ID|_{jk}})$.
    
\item Execute BasiDel$(P_{ID|_{jk}}, H(ID|_{mk}) \ldots H(ID|_{(j+1)k}), \mathbf{T}_{\mathbf{B}_k}, \sigma_{jk})$ to produce a secret signing key for $ID|_{jk}$ and delivers the resulting secret key to $\mathcal{A}.$
\end{enumerate}
\textbf{Random-oracle hash queries($h$) and sign queries.} $\mathcal{C}$ generates a list $L$ to record the responses to the $h$ queries.
$\mathcal{C}$ runs $\mathcal{A}$ on public key $P_{ID|_{ik}}$, for $i \in \{1, \ldots, N\}$ and $\mathcal{A}$ declares to $\mathcal{C}$ the identity $ID_{ik}^*$ chosen for the challenge along with a message $M_{ik}^*$ to be signed and provides a simulation of the random oracle $h$ together with signing oracle as follows.
Since, $k= |(ID_{ik}^*)|$. Recall that $A_i = A_{0} R^*_{iw} \cdots R^*_{i1}$. If $w \neq k$ then the simulator terminates and does not succeed.
Next, assume $w=k$ and $ID_{ik}^*$ is satisfying $H(ID_{ik}^*) = R_{ik}^*$ for all $i \in \{1, \ldots, k\}$. It follows directly from the definition that
$P_{ID|_{ik}^*} = \mathbf{A_i} (R_{i1}^*)^{-1} \ldots (R_{ik}^*)^{-1} = \mathbf{A_0} \in \mathbb{Z}_q^{n \times m}.$
Assume, without loss of generality that $\mathcal{A}$ queries $h$ for each message $M_{ik}$ prior to issuing a signing query on $M_{ik}$ for fixed $k$.
\begin{itemize}
\item For each query made to $h$ on a different $M_{ik} \in \{0, 1\}^*$, $\mathcal{C}$ chooses $a_{ik} \in \{0,1\}^n$ uniformly at random, runs $v_{ik} \leftarrow \text{SampleDom}(1^n)$, and stores $(M_{ik}, a_{ik}, ID|_{ik}, h, v_{ik})$ in the list $L$, and returns $P_{ID|_{ik}} v_{ik} = h(M_{ik}, a_{ik}, ID|_{ik})$ to $\mathcal{A}$. (If $h$ already been queried on $M_{ik}$, $\mathcal{C}$ searches for $(M_{ik}, v_{ik})$ and supplies $P_{ID|_{ik}} v_{ik}$.)
    
\item Whenever $\mathcal{A}$ issues a signing query on the pair $(M_{ik},ID|_{ik})$, $\mathcal{C}$ looks up the list in its local memory and outputs $v_{ik}$ as the resulting signature.
\end{itemize}
Let us suppose, without loss of generality that prior to producing its attempted forgery $(M_{ik}^{*}, ID|_{ik}^*, v_{ik})$, $\mathcal{A}$ queries $h$ on $(M_{ik}^{*}, ID|_{ik}^*)$ and produces the signature  $v_{ik}^{*}$ for the message-identity pair  $(M_{ik}^{*}, ID|_{ik}^*)$.\\

\noindent \textbf{Aggregate queries.} The adversary $\mathcal{A}$ is allowed to request aggregate signatures on any set of tuples 
$\{(M_{ik}, ID_{ik}, v_{ik})\}_{i=1}^{N}$ of its choice except the list containing challenged message-identity pair.
Upon receiving such a request, the challenger $\mathcal{C}$ computes a compact aggregate 
signature by employing the Chinese Remainder Theorem (CRT) and the sampling algorithm 
$\textsf{sampleDom}(1^n)$. The resulting HIBS-DA signature $v_k$ satisfies
\[
\sum_{i=1}^{N} P_{ID|_{ik}} \cdot v_k 
  = \sum_{i=1}^{N} h(M_{ik}, a_{ik}, ID|_{ik}).
\]

\noindent \textbf{Forgery.}
After obtaining sufficient information through oracle queries, 
$\mathcal{A}$ outputs a forged aggregate signature $v_k^*$. 
In order for the forgery to be valid, the adversary must include at least one 
identity--message pair $(ID_{ik}^*, M_{ik}^*)$ that was not queried to the signing oracle, 
although it may have queried the hash oracle on that pair.

Upon receiving the forged signature $v_k$ corresponding to the tuples 
$\{(M_{ik}, ID_{ik}, v_{ik})\}_{i=1}^{N}$, 
the challenger $\mathcal{C}$ inspects its record of message-identity pairs 
and identifies the corresponding entry $(M_{ik}^*, ID_{ik}^*, v_{ik}^*)$. 
It then compares the legitimate HIBS-DA signature $v_k^*$ with the forged one, obtaining
\[
A_0 \cdot (v_k - v_k^*) = 0,
\]
which constitutes a valid solution to an SIS instance.\\
We need to only show that $v_k \neq v_k^*$.
Applying the preimage min-entropy property (refer \ref{meh}) of the employed family of hash functions, the conditional min-entropy of $v_k$, given 
$\mathbf{A}_i[H(ID_{ik}) \ldots H(ID_{i1})]^{-1} \cdot v_k$ (and the remaining view obtained by 
$\mathcal{A}$, which is not dependent on $v_k$) is $\omega(\log n)$. 
Consequently, the probability that $v_k = v_k^*$ is negligible, 
and thus the forgery succeeds only with negligible probability.
\end{proof}

\section{Efficiency Analysis} \label{eff ana}
This section presents the efficiency analysis of our proposed HIBS-DA scheme.
We describe the detailed round complexity, the communication complexity, and the computation complexity of our proposed HIBS-DA sheme below.\\

\noindent \textbf{Round Complexity}: Our proposed HIBS-DA scheme is \textbf{non-interactive}. The signing process is executed locally by each signer individually by using their corresponding secret key. Subsequently, aggregation requires only a single, one-way transmission of individual signatures from the signers to the designated aggregator, who then performs the aggregation as a local computation. Since an interactive protocol is formally defined as one requiring multiple rounds of communication back and forth between parties, the core operations of our scheme, signing and aggregation, exhibit a round complexity of zero.\\

\noindent \textbf{Communication Complexity}: The communication complexity of our proposed HIBS-DA scheme is analyzed in terms of the total number of bits exchanged among the involved entities, namely the root PKG, subordinate PKGs, signers, the designated aggregator, and the verifier. 
Each phase of the scheme involves the transmission of lattice-based matrices or vectors whose sizes depend on the lattice parameters $m$, $n$, and the modulus $q$. 
Specifically, each element in $\mathbb{Z}_q$ can be represented using $\lceil \log_2 q \rceil$ bits, which allows us to express the communication cost in terms of bit lengths. 
Table~\ref{tab:comm_bits} summarizes the communication requirements of each phase in bits.
\begin{table}[H]
\centering
\caption{Communication complexity of our proposed HIBS-DA scheme}
\renewcommand{\arraystretch}{1.2}
\begin{tabular}{|l|l|l|l|}
\hline
\textbf{Phase} & \textbf{Communication Parties} & \textbf{Data Transmitted} & \textbf{Bit Complexity} \\ \hline
\textbf{Setup} & PKG $\rightarrow$ Public & Publish $\mathbf{A} \in \mathbb{Z}_q^{n \times m}$ & $n m \lceil \log_2 q \rceil$ bits (once) \\ \hline
\textbf{Key Derivation} & Parent $\rightarrow$ Child & Send $SK_{ID|_{ir}} \in \mathbb{Z}_q^{m \times m}$ & $m^2 \lceil \log_2 q \rceil$ bits per delegation \\ \hline
\textbf{Signing} & (Local operation) & None & $0$ bits \\ \hline
\textbf{Aggregation} & $N$ Signers $\rightarrow$ Aggregator & Send $N$ signatures $v_{ik} \in \mathbb{Z}_q^m$ & $N m \lceil \log_2 q \rceil$ bits \\ \hline
\textbf{Verification} & Aggregator $\rightarrow$ Verifier & Send $v_k \in \mathbb{Z}_q^m$ & $m \lceil  \log_2 q \rceil$ bits \\ \hline
\end{tabular}
\label{tab:comm_bits}
\end{table}
\noindent The total communication cost of our HIBS-DA scheme is obtained by summing the communication requirements of all components.
Since the hierarchical depth is $(t+1)$, there are $t$ key delegation steps. 
Hence, the total communication complexity in bits is:
\begin{equation}
\begin{aligned}
C_{\text{total}} 
&= (n m \lceil \log_2 q \rceil) + (t \cdot m^2 \lceil \log_2 q \rceil) + (N m \lceil \log_2 q \rceil) + (m \lceil \log_2 q \rceil) \\
&= (m + t m^2 + n m + N m) \lceil \log_2 q \rceil \\
&\approx O\big((t m^2 + N m)\lceil \log_2 q \rceil\big) \text{ bits.}
\end{aligned}
\end{equation}

\noindent \textbf{Computation complexity}: 
The computation cost of each algorithm in our proposed lattice-based HIBS-DA scheme is summarized in Table~\ref{tab:complexity}. 
The analysis is carried out relative to the lattice parameters \( n \), \( m \), modulus \( q \), hierarchy depth \( k \), and the number of aggregated users \( N \). 
Here, \( \tilde{O}(\cdot) \) hides polylogarithmic factors in \( n \) and \( q \). 

During the Setup phase, the TrapGen algorithm dominates the cost, resulting in complexity \( \tilde{O}(n^2) \). 
The Derive phase involves basis delegation and matrix multiplications at each hierarchical level, with an overall complexity of \( \tilde{O}(k \cdot m^3) \). 
Both the Sign and Verify phases rely primarily on Gaussian sampling and matrix--vector operations, yielding \( \tilde{O}(m^2) \) and \( O(m^2) \) complexities, respectively. 
The Aggregate phase introduces an additional Chinese Remainder Theorem (CRT) computation \( O(Nm) \) and short vector sampling \( \tilde{O}(m^2) \). 
Finally, the AggregateVerify phase performs \( N \) parallel matrix--vector multiplications, resulting in \( O(Nm^2) \) complexity.\\
\noindent\textit{Note:} \( \tilde{O}(\cdot) \) hides polylogarithmic factors in \( n \) and \( q \).\\
\begin{table}[h!]
\centering 
\caption{Computation Complexity of our proposed HIBS-DA scheme}
\begin{tabular}{|l|l|l|}
\hline
\textbf{Algorithm} & \textbf{Main Cost} & \textbf{Asymptotic Complexity} \\ \hline
\textbf{Setup} & Trapdoor generation & $\tilde{O}(n^2)$ \\ \hline
\textbf{Derive} & Basis Delegation & $\tilde{O}(k \cdot m^3)$ \\ \hline
\textbf{Sign} & Gaussian Sampling & $\tilde{O}(m^2)$ \\ \hline
\textbf{Verify} & Matrix-Vector Multiplication & $O(m^2)$ \\ \hline
\textbf{Aggregate} & CRT + intersection lattice sampling & $O(N m) + \tilde{O}(m^2)$ \\ \hline
\textbf{AggregateVerify} & \(N\) Matrix-Vector Multiplications & $O(N \cdot m^2)$ \\ \hline
\end{tabular}
\label{tab:complexity}
\end{table}

\noindent In this analysis, we have discussed the round complexity, the computation complexity, and
 the communication complexity in terms of bits where the communication cost depends on the parameters $n,m,q$.

\section{Conclusion}
In this paper, we provided the {\em{first}} hierarchical identity-based signature with designated aggregator protocol that effectively addresses the challenges of secure, efficient, and scalable authentication in multi-level organizational settings. Our analysis confirms that the proposed scheme ensures correctness, unforgeability, and consistency, while maintaining scalability and practicality within the hierarchy, making HIBS-DA a strong candidate for deployment in modern decentralized systems that require structured authentication with minimal overhead. Our scheme is non-interactive with zero round complexity. Moreover, the communication complexity is
$O\!\left((t m^{2} + N m)\,\lceil \log_{2} q \rceil\right)$ bits, and the scheme achieves computation complexities of $\tilde{O}(n^{2})$ for Setup, $\tilde{O}(k m^{3})$ for Derive, $\tilde{O}(m^{2})$ for Sign, $O(m^{2})$ for Verify, $O(N m) + \tilde{O}(m^{2})$ for Aggregate, and $O(N m^{2})$ for AggregateVerify.

\bibliography{Ref2.bib}
\bibliographystyle{unsrt}

\section{Appendix} 
\subsection{A Specific Case of the Chinese Remainder Theorem}\label{crt}
In our proposed construction, we employ Theorem \ref{crth} to decompose the underlying ring structure and facilitate efficient computation. Specifically, we consider the commutative ring 
\[
R = \mathbb{Z}^m \text{ and for each } i \in \{1, \dots, N\} \text{ and for some fixed } k, \; S_i = \mathbb{L}^{ik}.
\]

\noindent In particular, let us suppose $\mathbf{e}_i = (y_1, \ldots, y_i, \ldots, y_m)$, where $y_j = 1$ for $i = j$ and $y_j = 0$ for $i \neq j$ with $i= 1, \ldots, m$. Suppose $P_{ID|_{ik}} \in \mathbb{Z}_q^{n \times m}$ for $i = 1, \ldots, N$ and for some fixed $k \in \{1, \ldots, t\}$ (According to the construction \ref{CONS}).\\     
If $N<m$, consider 
\begin{align*}
    &\mathbb{L}^{1k}=\mathbb{L}_q^{\perp}(P_{ID|_{1k}})+\langle \mathbf{e}_2, \mathbf{e}_3, \ldots, \mathbf{e}_m\rangle,\nonumber\\
    & \mathbb{L}^{2k}=\mathbb{L}_q^{\perp}(P_{ID|_{2k}})+\langle \mathbf{e}_1, \mathbf{e}_3, \ldots, \mathbf{e}_m\rangle,\nonumber\\
    &\vdots\\
    & \mathbb{L}^{(N-1)k}=\mathbb{L}_q^{\perp}(P_{ID|_{(N-1)k}})+\langle \mathbf{e}_1,  \ldots, \mathbf{e}_{N-2}, \mathbf{e}_N, \ldots, \mathbf{e}_m\rangle,\nonumber\\ 
    & \mathbb{L}^{Nk}=\mathbb{L}_q^{\perp}(P_{ID|_{Nk}})+\langle \mathbf{e}_1, \ldots, \mathbf{e}_{N-1}, \mathbf{e}_{N+1}, \ldots, \mathbf{e}_m \rangle.\nonumber
\end{align*}
If $N \geq m$, consider
\begin{align*}
    &\mathbb{L}^{1k}=\mathbb{L}_q^{\perp}(P_{ID|_{1k}})+\langle \mathbf{e}_2, \mathbf{e}_3, \ldots, \mathbf{e}_m\rangle, \nonumber\\
    &\mathbb{L}^{2k}=\mathbb{L}_q^{\perp}(P_{ID|_{2k}})+\langle \mathbf{e}_1, \mathbf{e}_3, \ldots, \mathbf{e}_m \rangle,\nonumber\\
    &\vdots\\
    &\mathbb{L}^{mk}=\mathbb{L}_q^{\perp}(P_{ID|_{mk}})+\langle \mathbf{e}_1, \mathbf{e}_2, \ldots, \mathbf{e}_{m-1}\rangle,\nonumber\\ &\mathbb{L}^{(m+1)k}=\mathbb{L}_q^{\perp}(P_{ID|_{(m+1)k}}) + \langle \mathbf{e}_1, \mathbf{e}_2, \ldots, \mathbf{e}_m \rangle,\nonumber\\
    &\vdots \\
    &\mathbb{L}^{Nk}=\mathbb{L}_q^{\perp}(P_{ID|_{Nk}}) + \langle \mathbf{e}_1, \mathbf{e}_2, \ldots, \mathbf{e}_m  \rangle.\nonumber
\end{align*}
The lattice $\mathbb{L}^{ik}$ is well defined for $i= 1, \ldots, N$. 
\begin{align*}
    \mathbb{L}^{ik}=&\mathbb{L}_q^{\perp}(P_{ID|_{ik}})+\langle \mathbf{e}_1, \ldots,  \mathbf{e}_{i-1}, \mathbf{e}_{i+1}, \ldots, \mathbf{e}_m \rangle\\
    =&\left\{ \mathbf{x}=(x_1, \ldots, x_m) \in \mathbb{Z}^m : P_{ID|_{ik}} \mathbf{x}=\mathbf{0} \bmod q \right\}\\
    &\hspace{20pt}+\left\{ a_{1k} e_1 + a_{2k} e_2 + \ldots + a_{(i-1)k} e_{i-1} + a_{(i+1)k} e_{i+1} + \ldots a_{mk} e_m \right\} \\
    =&\left\{ (a_{1k} x_1, a_{2k} x_2, \ldots, a_{(i-1)k} x_{i-1}, x_i, a_{(i+1)k} x_{i+1}, \ldots, a_{mk} x_m) \in \mathbb{Z}^m : x_i, a_{ik} \in \mathbb{Z}, \forall i \in \{1,2, \ldots m\} \right\}.
\end{align*}
So, we obtain $N$ lattices $\mathbb{L}^1, \mathbb{L}^2, \ldots, \mathbb{L}^N$ such that for $i \neq j$,
\begin{align*}
    \mathbb{L}^i + \mathbb{L}^j
    =&\bigl(\mathbb{L}_q^{\perp}(P_{ID|_{ik}}) + \langle \mathbf{e}_1, \ldots, \mathbf{e}_{i-1}, \mathbf{e}_{i+1}, \ldots, \mathbf{e}_m \rangle\bigr)\\
    &+ \bigl(\mathbb{L}_q^{\perp}(P_{ID|_{jk}}) + \langle \mathbf{e}_1, \ldots, \mathbf{e}_{j-1}, \mathbf{e}_{j+1}, \ldots, \mathbf{e}_m \rangle\bigr)\\
    =&\mathbb{Z}^m.
\end{align*}
\noindent Since, $R = \mathbb{Z}^m$ be a ring. For each $i \in \{1, \dots, N\}$ and for some fixed $k$, $S_i = \mathbb{L}^{ik}$ be ideals of $R$ and we have already established that $\mathbb{L}^i + \mathbb{L}^j = \mathbb{Z}^m$ for $i \neq j$. Then by CRT. (refer~\ref{crth}), 
$$\frac{\mathbb{Z}^m}{\mathbb{L}^{1k} \cap \mathbb{L}^{2k} \cap \cdots \cap \mathbb{L}^{Nk}} \cong \frac{\mathbb{Z}^m}{\mathbb{L}^{1k}} \times \frac{\mathbb{Z}^m}{\mathbb{L}^{2k}} \times \cdots \times \frac{\mathbb{Z}^m}{\mathbb{L}^{Nk}} 
.$$ Therefore, for some fixed $k$, $\exists$ a ring isomorphism 
\[\frac{\mathbb{Z}^m}{\bigcap_{i=1}^N \mathbb{L}^{ik}} \;\cong\; \prod_{i=1}^{N} \frac{\mathbb{Z}^m}{\mathbb{L}^{ik}}, \] 
given by
\[t \;\mapsto\;(t\bmod\mathbb{L}^{1k}, t\bmod\mathbb{L}^{2k}, \ldots, t\bmod\mathbb{L}^{Nk}).\]

\subsection{Integration of the CRT into our proposed HIBS-DA scheme} \label{INT}
In this subsection, we show how the CRT \ref{crt} is applied to our proposed HIBS-DA scheme.\\
From Section \ref{CONS}, in Sign algorithm, we observe that the individual signature $v_{ik}$, for each $i \in \{1, \ldots, N\}$ and fixed $k \in \{1, \ldots, t\}$, belongs to the lattice $\mathbb{L}_q^{h}(P_{ID|_{ik}})$, where $h = h(M_{ik}, b_{ik}, \text{ID}|_{ik})$ and $\mathbb{L}_q^{h}(P_{ID|_{ik}}) = \left\{ \mathbf{t} \in \mathbb{Z}^m : P_{ID|_{ik}}\mathbf{t} = h(M_{ik}, b_{ik}, \text{ID}|_{ik}) \pmod q \right\}$\\
\noindent Since
\begin{align*}
    v_{ik} &\in \mathbb{L}_q^{h}(P_{ID|_{ik}}) 
            = \mathbb{L}_q^{\perp}(P_{ID|_{ik}}) + \mathbf{t_1}, 
            \quad \text{for some } \mathbf{t_1} \text{ satisfying } 
            P_{ID|_{ik}} \mathbf{t_1} = h \pmod{q}. \\
    v_{ik} &\in \mathbb{L}_q^{\perp}(P_{ID|_{ik}}) + \mathbf{t_1} 
            \implies
            v_{ik} \in \big(\mathbb{L}_q^{\perp}(P_{ID|_{ik}}) + \mathbf{t_1}\big) 
            + \langle \mathbf{e}_1, \ldots, \mathbf{e}_{i-1}, \mathbf{e}_{i+1}, \ldots, e_{m} \rangle \\
            &= \big(\mathbb{L}_q^{\perp}(P_{ID|_{ik}}) 
            + \langle \mathbf{e}_1, \ldots, \mathbf{e}_{i-1}, \mathbf{e}_{i+1}, \ldots, e_{m} \rangle \big) + \mathbf{t_1}, \mathbf{t_1} \in \mathbb{Z}^m\\
            &\implies
            v_{ik} \in \frac{\mathbb{Z}^{m}}{\mathbb{L}^{ik}}. 
\end{align*}
Therefore, 
\[
(v_{1k}, v_{2k}, \ldots, v_{Nk}) \in 
\frac{\mathbb{Z}^{m}}{\mathbb{L}^{1k}} 
\times 
\frac{\mathbb{Z}^{m}}{\mathbb{L}^{2k}} 
\times \cdots \times 
\frac{\mathbb{Z}^{m}}{\mathbb{L}^{Nk}},
\]
by CRT \ref{crth}, there exists an element 
\[
x \in \frac{\mathbb{Z}^{m}}
{\mathbb{L}^{1k} \cap \mathbb{L}^{2k} \cap \cdots \cap \mathbb{L}^{Nk}}
\]
such that
\begin{align*}
    x &\equiv v_{1k} \pmod{\mathbb{L}^{1k}}, \\
    x &\equiv v_{2k} \pmod{\mathbb{L}^{2k}}, \\
    &\vdots \\
    x &\equiv v_{Nk} \pmod{\mathbb{L}^{Nk}}
\end{align*}
defined in \ref{crt}.

\end{document}